\documentclass[11pt,a4paper]{article}

\usepackage[a4paper,margin=1in]{geometry}
\usepackage[T1]{fontenc}
\usepackage{lmodern}
\usepackage{microtype}

\usepackage{amssymb,latexsym,mathtools,amsfonts,amsthm}
\usepackage{amsbsy}
\usepackage{bm}
\usepackage{mathrsfs}
\usepackage{extarrows}
\usepackage{enumerate}
\usepackage{enumitem}
\usepackage{float}
\usepackage{multirow}

\usepackage{graphicx}
\usepackage{overpic}
\usepackage{tikz}
\usetikzlibrary{arrows.meta,calc,decorations,decorations.markings}

\tikzset{
  boundarycircle/.style={
    draw=black,
    line width=1.25pt,
    dash pattern=on 6pt off 6pt
  },
  line/.style={
    draw=black,
    line width=1.6pt,
    line cap=round,
    line join=round
  },
  axis/.style={
    line,
    line width=1.25pt,
    dash pattern=on 5pt off 6pt
  },
  flow/.style={
    line,
    postaction={decorate},
    decoration={
      markings,
      mark=at position #1 with
        {\arrow{Stealth[length=5.3pt,width=5.8pt]}}
    }
  },
  flow/.default=.55,
  twoflow/.style args={#1 and #2}{
    line,
    postaction={decorate},
    decoration={
      markings,
      mark=at position #1 with
        {\arrow{Stealth[length=5.3pt,width=5.8pt]}},
      mark=at position #2 with
        {\arrow{Stealth[length=5.3pt,width=5.8pt]}}
    }
  },
  axisflow/.style={
    axis,
    postaction={decorate},
    decoration={
      markings,
      mark=at position #1 with
        {\arrow{Stealth[length=5.0pt,width=5.5pt]}}
    }
  },
  axisflow/.default=.88,
  label/.style={
    font=\small,
    inner sep=1pt
  },
  regionlabel/.style={
    font=\normalsize,
    inner sep=1pt
  },
  contourlabel/.style={
    font=\normalsize,
    inner sep=1pt
  }
}
\usepackage{xcolor}
\usepackage{cite}
\usepackage{hyperref}

\hypersetup{
    hidelinks,
    colorlinks=true,
    linkcolor=blue,
    citecolor=blue
}

\newtheorem{theorem}{Theorem}[section]
\newtheorem{lemma}[theorem]{Lemma}

\newtheorem{assume}[theorem]{Assumption}
\newtheorem{RHP}[theorem]{RH Problem}

\theoremstyle{remark}
\newtheorem{remark}[theorem]{Remark}

\numberwithin{equation}{section}
\allowdisplaybreaks
\newcommand{\R}{\mathbb{R}}
\newcommand{\C}{\mathbb{C}}

\newcommand{\ii}{\mathrm{i}}

\newcommand{\e}{\mathrm{e}}

\newcommand{\Q}{\mathcal{Q}}

\newcommand{\cC}{\mathcal{C}}

\newcommand{\cD}{\mathcal D}
\newcommand{\cR}{\mathcal R}
\newcommand{\cP}{\mathcal{P}}

\newcommand{\eps}{\epsilon}

\DeclareMathOperator{\diag}{diag}

\def\be{\begin{equation}}
\def\ee{\end{equation}}
\def\bse{\begin{subequations}}
\def\ese{\end{subequations}}
\def\bpm{\begin{pmatrix}}
\def\epm{\end{pmatrix}}
\def\bi{\begin{itemize}}
\def\ei{\end{itemize}}

\title{\bfseries
Tritronqu\'ee Painlev\'e II asymptotics for the focusing nonlinear Schr\"odinger equation with nonzero boundary conditions}
\date{}

\author{\hspace{0.6 cm}{Haibing Zhang$^{a}$, Xianguo Geng$^{a,b}$\footnote{\footnotesize
 Corresponding author. {\sl Email address}: xggeng@zzu.edu.cn}, Kedong Wang$^{a}$}\\
\leftline{\hspace{0.6 cm}{\small{\sl $^{a}$ School of Mathematics and Statistics, Zhengzhou University, 100 Kexue Road, Zhengzhou, }}}\\
\leftline{\hspace{0.6 cm}{\small{\sl \quad Henan 450001, People's Republic of China}}}\\
\leftline{\hspace{0.6 cm}{\small{\sl $^{b}$ School of Mathematics and Statistics, North China University of Water Resources }}}\\
\leftline{\hspace{0.6 cm}{\small{\sl \quad and Electric Power, Zhengzhou, Henan 450011, People's Republic of China}}}\\}

\begin{document}
\maketitle
\begin{abstract}
We study the long-time asymptotics of the focusing nonlinear Schr\"odinger
equation with nonzero boundary conditions in the transition regions between
the plane-wave and modulated elliptic-wave regimes.  Biondini and
Mantzavinos showed that, away from the transition curves
\(x=\pm 4\sqrt{2}\,q_o t\), the \((x,t)\)-half-plane decomposes, to leading
order, into two plane-wave regions and a central region described by slowly
modulated elliptic oscillations.  However, their asymptotic formulae are not
uniform near the boundaries separating these regions.  The purpose of this
paper is to resolve this missing boundary layer.

Using a double-scaling nonlinear steepest descent analysis of the associated
Riemann--Hilbert problem, we show that the leading term in each transition
region is still a plane wave, while the first nontrivial correction is of order
\(t^{-1/3}\).  The coefficient of this correction is expressed in terms of a
distinguished tritronqu\'ee solution of an inhomogeneous Painlev\'e-II equation.
This Painlev\'e-II tritronqu\'ee structure is also known to appear in the
asymptotic analysis of rogue waves of infinite order.
\end{abstract}

\vspace{0.2cm}
\noindent{\bf Keywords}\quad focusing nonlinear Schr\"odinger equation; nonzero boundary conditions; nonlinear steepest descent; inhomogeneous Painlev\'e-II equation; tritronqu\'ee solution.

\vspace{0.2cm}
\noindent{\bf Mathematics Subject Classification}\quad 35Q55, 35Q15, 35B40,
37K40.

\tableofcontents

\section{Introduction}

The one-dimensional focusing nonlinear Schr\"odinger (NLS) equation
\be \label{E:wbhFNLS}
\ii q_t+q_{xx}+2 |q|^2q=0,
\qquad x\in\R,\quad t\ge 0,
\ee
is one of the central model equations in nonlinear wave theory and integrable
systems.  It has been studied extensively for more than half a century; see,
for example, the monographs~\cite{APT2004,AH1981,EG2003,CP1999} and the
references therein.  In particular, the focusing NLS equation plays a
distinguished role in the study of modulational instability~\cite{ZO2009,BF2015,BLMT2018}.
Modulational instability, also known as the Benjamin--Feir instability in the
context of water waves~\cite{B1967-1,B1967-2}, refers to the instability of a
constant background under long-wavelength perturbations.  At the linearized
level, sufficiently low-wavenumber modes grow exponentially in time.  This
linear description, however, only captures the early stage of the instability.
Once the perturbation becomes comparable with the background, nonlinear effects become essential.  The subsequent evolution is usually
referred to as the nonlinear stage of modulational instability.

A standard model for studying modulational instability on the infinite line is the focusing cubic NLS equation with symmetric nonzero boundary conditions (NZBCs): \be\label{E:wbhbjtj} 
\lim_{x\to\pm\infty} q(x,t)=q_\pm e^{2\ii q_o^2t}, \qquad t\ge 0, 
\ee 
where \(q_\pm\) are constants satisfying \(q_\pm=q_o e^{\ii\theta_\pm}\), with \(q_o>0\) and \(\theta_\pm\in[0,2\pi)\). After a standard gauge transformation, \eqref{E:wbhFNLS}--\eqref{E:wbhbjtj} is equivalent to the following initial-value problem~(IVP) with time-independent NZBCs: 
\begin{subequations}\label{ds-fnls-ivp} \begin{align} &\ii q_t+q_{xx}+2\left(|q|^2-q_o^2\right)q=0, \qquad && x\in\R,\quad t>0, \label{ds-fnls-ivp-fnls-eq} \\ &q(x,0)=f(x), && x\in\R, \\ &\lim_{x\to\pm\infty}q(x,t)=q_\pm, && t\ge 0. \label{ds-fnls-ivp-bc} 
\end{align} \end{subequations} 
This is the main problem studied in the present paper.
Well-posedness results for the IVP~\eqref{ds-fnls-ivp} are available by
harmonic analysis methods.  For instance, Mu\~noz~\cite{m2017} proved local
well-posedness for perturbations of the plane-wave background in Sobolev spaces
\(H^s\), \(s>1/2\).  Related stability and instability results for NLS
breathers on nonzero backgrounds can be found in~\cite{AFM2019,AFM2020}.

The complete integrability of the NLS equation~\cite{zs1971} allows one to analyze
IVP~\eqref{ds-fnls-ivp} by means of the inverse scattering transform (IST),
which can be regarded as a nonlinear analogue of the Fourier transform. The corresponding IST for~\eqref{ds-fnls-ivp} was developed by Biondini
and Kova\v{c}i\v{c}~\cite{BK2014}.  A central outcome of their work was to
associate the solution of IVP~\eqref{ds-fnls-ivp} with a \(2\times2\) matrix
Riemann--Hilbert (RH) problem, from which the potential can be recovered by an appropriate reconstruction formula.
This RH formulation provides the starting point for the rigorous analysis of
the long-time behavior of solutions to IVP~\eqref{ds-fnls-ivp}. The main
asymptotic tool used for this purpose is the Deift--Zhou nonlinear steepest
descent method~\cite{pz93}, which is a fundamental framework for the
large-parameter analysis of oscillatory RH problems.  
This method has been widely used to study long-time asymptotics in integrable
systems, and it has also played an important role in orthogonal polynomial
theory and random matrix theory; see, for example,
\cite{LLY2026-JDE,DLM2020,DP2024,boutet2011,CLPa2020,BJM2018,PZ2002,Monvel5,BIS2010,WF2023,CuJe2016,
CL2024main,WZZ2026,HWZ2026,YTL2025,FLYZ2026,CLW2023,CJ2024,DKMVZ1999,Deift1999}.
In the present setting, the combination of the IST with the Deift--Zhou method
provides an effective and systematic approach to  the nonlinear stage of modulational instability for generic perturbations
of the constant background.

\begin{figure}
\centering
\begin{tikzpicture}[x=0.30cm,y=1.10cm,>=latex,font=\footnotesize]

\def\tmax{3.2}          
\def\xmax{20.0}         
\def\xiMain{5.6}        
\def\xiIn{5.2}          
\def\xiOut{6.0}         

\pgfmathsetmacro{\xmain}{\xiMain*\tmax}
\pgfmathsetmacro{\xin}{\xiIn*\tmax}
\pgfmathsetmacro{\xout}{\xiOut*\tmax}

\draw[->, line width=0.9pt] (-\xmax,0) -- (\xmax,0) node[below] {\(x\)};
\draw[->, line width=0.9pt] (0,0) -- (0,3.65) node[left] {\(t\)};


\fill[blue!18]
  (0,0) -- (\xin,\tmax) -- (-\xin,\tmax) -- cycle;

\fill[green!18]
  (0,0) -- (-\xmax,0) -- (-\xmax,\tmax) -- (-\xout,\tmax) -- cycle;

\fill[green!18]
  (0,0) -- (\xmax,0) -- (\xmax,\tmax) -- (\xout,\tmax) -- cycle;


\draw[dashed, line width=0.9pt] (0,0) -- (\xin,\tmax);
\draw[dashed, line width=0.9pt] (0,0) -- (\xout,\tmax);

\draw[dashed, line width=0.9pt] (0,0) -- (-\xin,\tmax);
\draw[dashed, line width=0.9pt] (0,0) -- (-\xout,\tmax);

\draw[line width=1.2pt] (0,0) -- (\xmain,\tmax);
\draw[line width=1.2pt] (0,0) -- (-\xmain,\tmax);


\node[align=center] at (0,2.0)
{
modulated elliptic-wave region\\
\(|\xi|<4\sqrt{2}\,q_o\)
};

\node[align=center] at (12.8,1.05)
{
plane-wave region\\
\(|\xi|>4\sqrt{2}\,q_o\)
};

\node[align=center] at (-12.8,1.05)
{
plane-wave region\\
\(|\xi|>4\sqrt{2}\,q_o\)
};

\node[align=center] at (17.3,3.55)
{
transition region\\
\(\xi\approx 4\sqrt{2}\,q_o\)
};

\node[align=center] at (-17.3,3.55)
{
transition region\\
\(\xi\approx -4\sqrt{2}\,q_o\)
};

\node[below left] at (0,0) {\(0\)};

\end{tikzpicture}
\caption{The space-time regions for the focusing NLS equation. The two solid rays correspond to $\xi=\pm 4\sqrt{2}\,q_o$. The central wedge represents the modulated elliptic-wave
region, the outer domains correspond to the plane-wave regions, and the narrow
bands indicated by dashed lines denote the transition regions.}
\label{fig:regions}
\end{figure}

In Ref.~\cite{BM2017}, Biondini and Mantzavinos adapted the Deift--Zhou
nonlinear steepest descent method, together with a suitable \(g\)-function
mechanism, to the long-time asymptotic analysis of
IVP~\eqref{ds-fnls-ivp}.  Their work provided the first rigorous description
of the asymptotic stage of modulational instability for generic localized
perturbations of a nonzero constant background, under the assumptions that no
discrete spectrum is present and that the initial data converge sufficiently
rapidly to the boundary values.  More precisely, they assumed an exponential
decay condition of the form
\be\label{ds-fnls-ic-space}
\e^{\pm q_o x}\bigl(f-q_\pm\bigr)\in L^1(\R_\pm),
\ee
which is also imposed throughout the present work. In terms of the self-similar variable
\[
\xi=\frac{x}{t},
\]
they showed that, to leading order, the \((x,t)\)-half-plane decomposes into
two qualitatively different types of regions.  In the two far-field regions
\(|x|>4\sqrt{2}\,q_o t\), the solution is asymptotically a plane wave with the
same amplitude as the corresponding boundary value, up to a phase shift
determined by the reflection coefficient.  Inside the central cone
\(|x|<4\sqrt{2}\,q_o t\), the leading-order behavior is no longer a
constant-amplitude wave; instead, it is described by a slowly modulated
elliptic wave.  The modulation parameters are determined by a system of
Whitham-type equations and are independent of the fine details of the initial
perturbation, while the initial data enter only through phase and
position-type shifts.  In this sense, their result shows that\textit{ the nonlinear stage of modulational instability is universal}. 
This asymptotic picture was later extended by Biondini, Li, and
Mantzavinos~\cite{BLM2021} to the case in which the scattering data contain a
conjugate pair of discrete eigenvalues.  Their analysis describes the
interaction between a soliton and the oscillatory wedge generated by the
continuous spectrum, including soliton transmission, trapping, and wake
formation on a modulationally unstable background.

The present paper addresses a different, but equally natural, question.  The
asymptotic formulae in~\cite{BM2017} describe the plane-wave and modulated
elliptic-wave regions away from the separating curves
\[
        x=\pm 4\sqrt{2}\,q_o t,
\]
but they are not uniform as one approaches these boundaries.  It is therefore
necessary to resolve the boundary layer between the plane-wave and
modulated elliptic-wave regimes; see Fig.~\ref{fig:regions}.  More precisely, we derive transition
asymptotics in the double-scaling regions
\[
        \left|\xi\pm 4\sqrt{2}\,q_o\right|\le C t^{-2/3},
\]
where \(C>0\) is fixed.  We show that, in each of these regions, the leading term of the solution is a plane wave.  The first nontrivial correction, however, is no longer the
standard \(t^{-1/2}\) correction that appears in the  plane-wave and
elliptic-wave regions.  Instead, it is of order \(t^{-1/3}\), and its
coefficient is expressed in terms of a distinguished tritronqu\'ee solution
\(\Q(y)\) of the inhomogeneous Painlev\'e-II equation
\be \label{E:NPII}
\frac{\mathrm d^2 \Q}{\mathrm d y^2}
+\frac{2}{3}y \Q
-2 \Q^3
+\frac{2}{3}\ii \nu
-\frac{1}{3}
=0,
\ee
characterized by the asymptotic condition
\be \label{E:CQ}
\Q(y)
=
\ii\left(-\frac{y}{3}\right)^{1/2}
-
\left(\frac{1}{4}-\frac{\ii \nu}{2}\right)\frac{1}{y}
+
\mathcal O\bigl(|y|^{-5/2}\bigr),
\qquad |y|\to\infty,\quad |\arg(-y)|<\frac{2\pi}{3}.
\ee
Here \(\nu\) is a constant determined by the reflection coefficient \(r(k)\)
at the critical point \(k_c=-q_o/\sqrt{2}\).

A systematic study of the relevant increasing tritronqu\'ee solutions of the inhomogeneous Painlev\'e-II equation was carried out by Miller~\cite{Miller}. In particular, Miller analyzed the RH representation of these solutions, their large-argument connection formulae, and the global behavior of the special solutions needed in applications. One important motivation came from the study of rogue waves of infinite order~\cite{DLM2020}. Bilman, Ling and Miller showed that high-order fundamental rogue waves of the focusing NLS equation have a nontrivial near-field limit, called the rogue wave of infinite order, which is itself a special solution of the focusing NLS equation and is related to the Painlev\'e-III hierarchy. In a transitional far-field regime of that limiting rogue wave, the asymptotics are described by a special Painlev\'e-II tritronqu\'ee solution.   

Since the left and right transition regions can be treated
in a similar way, we focus on the left transition region
\be \label{E:PL}
\cP=
\left\{(x,t)\in\R\times\R_+:\ |\xi+  4\sqrt{2}\,q_o  |\le Ct^{-2/3}\right\}.
\ee

The main theorem of this paper is stated as follows.

\begin{theorem}[Asymptotics in the transition region \(\mathcal P\)]\label{Th:main}
Let \(q(x,t)\) be the solution of the IVP~\eqref{ds-fnls-ivp} and assume that the
initial datum satisfies~\eqref{ds-fnls-ic-space} and  Assumption~\ref{As:gesa}.
Let \(r(k)\) denote the reflection coefficient associated with the initial
datum. Then, as \(t\to\infty\),
\be \label{E:asygs}
q(x,t)
=
q_- e^{2 \ii g_\infty}
+
t^{-1/3} q_{\rm p}(x,t)
+
\mathcal O\bigl(t^{-2/3}\log t\bigr),
\ee
uniformly for \((x,t)\in\mathcal P\). Here the phase \(g_\infty\) is defined
by~\eqref{E:defginfty}. The bounded function \(q_{\rm p}(x,t)\) is given by
\[
q_{\rm p}(x,t)
=
\frac{
2\ii\left[
\alpha_1 \mathcal V(y)+\alpha_2 \mathcal V^*(y)+\alpha_3 \widehat{\mathcal V}(y)
\right]
}{
\left(\tfrac{8\sqrt{6}}{9 q_o}\right)^{1/3}
},
\]
where the coefficients \(\{\alpha_j\}_{j=1}^3\) are defined
by~\eqref{E:defalpha}. The function \( \mathcal V (y)\) can be expressed in terms of the Painlev\'e-II tritronqu\'ee solution \(\mathcal Q(y)\) defined
in~\eqref{E:CQ} as follows:
\[
\mathcal V(y)=
\begin{cases}
\displaystyle
\alpha_0 \e^{\ii \phi_0}
e^{-\frac{2}{9} \sqrt{3}\ii \left(-y\right)^{3/2}}
(-3y)^{-\frac14+\frac{\ii \nu}{2}}
\\[1mm]
\displaystyle\qquad\qquad\qquad
{}\times
\exp\left\{
\int_{-\infty}^{y}
\left[
\Q(s)-\ii\left(-\frac{s}{3}\right)^{1/2}
+
\left(\frac14-\frac{\ii \nu}{2}\right)\frac{1}{s}
\right]\mathrm ds
\right\},
& y<0,
\\[5mm]
\displaystyle
\mathcal V(-1)\exp\left\{
\int_{-1}^{y}\Q(s)\,\mathrm ds
\right\},
& y\ge 0,
\end{cases}
\]
where
$$
y=
\frac{2}{\sqrt{3}}
\left(\frac{8\sqrt{6}}{9q_o}\right)^{-1/3}
t^{2/3}(\xi+ 4\sqrt{2}\,q_o ).
$$
Moreover,
\[
\alpha_0=\sqrt{-\frac{\nu}{2}},
\qquad
\phi_0=-\frac{3}{4}\pi+\nu\log 2+\arg\Gamma(-\ii\nu),
\]
and
\[
\nu=-\frac{1}{2\pi}\log\bigl(1+|r(k_c)|^2\bigr),
\qquad
k_c=-\frac{q_o}{\sqrt{2}}.
\]
Finally, \(\widehat{\mathcal V}(y)\) can be represented in terms of \(\mathcal V(y)\) by
\[
\widehat {\mathcal V}(y)=
\begin{cases}
\displaystyle
\ii\nu\left(-\frac{y}{3}\right)^{1/2}
+
\ii\int_{-\infty}^{y}
\left[
|\mathcal V (s)|^2
+
\frac{\nu}{2\sqrt{3}}(-s)^{-1/2}
\right]\,\mathrm ds,
& y<0,
\\[5mm]
\displaystyle
\widehat{\mathcal V}(-1)
+
\ii\int_{-1}^{y}|\mathcal V (s)|^2\,\mathrm ds,
& y\ge0.
\end{cases}
\]
\end{theorem}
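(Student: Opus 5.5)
We would prove Theorem~\ref{Th:main} by a double-scaling Deift--Zhou analysis of the Biondini--Kova\v{c}i\v{c} RH problem, reusing the plane-wave $g$-function of~\cite{BM2017}. With $\xi$ near $-4\sqrt2\,q_o$, we define $g(k;\xi)$ with branch cut on $[-\ii q_o,\ii q_o]$ (as in the plane-wave region). We would first check that at $\xi_c=-4\sqrt2\,q_o$ the two real stationary points of $g$ merge into the double point $k_c=-q_o/\sqrt2$. Near $k_c$ we then have $g(k)-g(k_c)\approx c_3(k-k_c)^3+c_1(\xi-\xi_c)(k-k_c)$, with $c_3$ producing the constant $(8\sqrt6/(9q_o))^{1/3}$. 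Next comes the usual sequence of steps. We normalize with $e^{\ii t g\sigma_3}$, which yields the constant $g_\infty$ and the leading term $q_-e^{2\ii g_\infty}$. We then conjugate by the scalar function $\delta(k)$ solving $\delta_+=\delta_-(1+|r|^2)$ on the half-line where the jump is not factorizable; its exponent at $k_c$ is $\ii\nu$ with $\nu=-\frac1{2\pi}\log(1+|r(k_c)|^2)$. Finally we open lenses using the upper/lower or lower/upper factorizations, with analytic approximations of $r$ (a $\bar\partial$ extension if $r$ is not analytic; Assumption~\ref{As:gesa} should supply the needed regularity and nonvanishing). This gives a problem whose jumps are exponentially close to the identity away from $k_c$ and from the branch cut. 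On the branch cut the jump is exactly the plane-wave outer model, solved explicitly as in~\cite{BM2017}.

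The local analysis at $k_c$ is the heart of the argument. We would introduce the scaled variable $\zeta=(c_3t)^{1/3}(k-k_c)$, normalized so that $t g$ becomes $\frac{4}{3}\zeta^3+ y\zeta$ up to constants, with $y\propto t^{2/3}(\xi-\xi_c)$ exactly as in the statement. The local jumps then reduce to those of the Painlev\'e-II-type model RH problem with a singularity $\zeta^{\ii\nu\sigma_3}$ coming from $\delta$, i.e.\ the Flaschka--Newell-type problem associated with the inhomogeneous PII~\eqref{E:NPII}. We take its solution as the tritronqu\'ee $\Q$ singled out by the Stokes data $r(k_c)$, $\overline{r(k_c)}$, and invoke Miller's results~\cite{Miller} for solvability for all real $y$ (no real poles) and for the asymptotics~\eqref{E:CQ}. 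The local parametrix is $E(k)\,M^{\rm PII}(\zeta;y)\,\delta$-factors, where $E$ is analytic and matches the outer solution. The $1/\zeta$ coefficient of $M^{\rm PII}$ involves a matrix with entries $\mathcal V,\mathcal V^*,\widehat{\mathcal V}$. Here $\mathcal V$ is an off-diagonal (Hamiltonian-type) entry satisfying $(\log\mathcal V)'=\Q$, and $\widehat{\mathcal V}$ is the diagonal entry with $\widehat{\mathcal V}'=\ii|\mathcal V|^2$. The integration constants are fixed from the $y\to-\infty$ asymptotics of the model problem, obtained by a separate steepest descent with two separated parabolic-cylinder points. That analysis yields $\alpha_0=\sqrt{-\nu/2}$, the phase $\phi_0$ involving $\arg\Gamma(-\ii\nu)$ and $\nu\log 2$, and the power $(-3y)^{-1/4+\ii\nu/2}$. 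This is exactly the representation claimed.

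The small-norm step follows. We set $R=S\,(\text{parametrix})^{-1}$. On the boundary of a disk of fixed radius, the jump is $I+t^{-1/3}\Delta_1(k)/\,(k-k_c)+\mathcal O(t^{-2/3})$, where $\Delta_1$ is built from the PII coefficient conjugated by the outer solution. Other jumps are exponentially small, with at most $\mathcal O(t^{-1})$ contributions from $\bar\partial$ terms and logarithmic factors from $\delta$ near $k_c$ (these give the $\log t$). Expanding $R$ via the Cauchy operator, we reconstruct $q=2\ii\lim k\,(\cdot)_{12}$ and read off the $t^{-1/3}$ term as a residue at $k_c$. Conjugating by the outer model mixes the entries of the PII $1/\zeta$-coefficient with explicit coefficients $\alpha_1,\alpha_2,\alpha_3$ depending on the outer solution and $\delta$ at $k_c$, giving~\eqref{E:defalpha}. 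The factor $(8\sqrt6/(9q_o))^{-1/3}$ comes from $1/\zeta=(c_3t)^{-1/3}/(k-k_c)$. Uniformity in $\mathcal P$ follows because $y$ ranges over a compact set and the model problem depends continuously on it.

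The main obstacle is the model PII problem. We need solvability for all real $y$ with uniform bounds, together with the precise identification of $\mathcal V,\widehat{\mathcal V}$ and their constants through the $y\to-\infty$ analysis. We would rely on Miller~\cite{Miller} for pole-freeness on $\R$. The matching of the $\delta$-singularity, whose $(k-k_c)^{\ii\nu}$ branch must align with the $\zeta$-cuts, is the delicate bookkeeping step.
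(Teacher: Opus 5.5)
Your overall architecture matches the paper's: a $\delta$-conjugation on $(-\infty,k_c)$ producing $z^{2\ii\nu}$, lens opening, the explicit plane-wave outer model on $B$, Miller's Painlev\'e-II model at $k_c$ with $p=-\nu$ and $\tau=|r(k_c)|$, and a small-norm problem whose $t^{-1/3}$ term is a residue at $k_c$ conjugated by $M^{\rm out}(k_c)$. But your first step conflates two different functions, and as written it fails.

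\textbf{The gap: $\theta$ versus $g$.} The cubic degeneration $c_1(\xi-\xi_c)(k-k_c)+c_3(k-k_c)^3$ and the merging of the two real stationary points belong to the phase $\theta(\xi,k)=\lambda(k)(\xi-2k)$. This phase already sits in the jump matrices and needs no replacement anywhere in $\cP$. The function whose limit gives $g_\infty$ is a different object. It is a $t$-independent $g$, analytic off $B$, with $e^{\ii(g_++g_-)}=\delta^2$ on $B$ and $g/\lambda=\mathcal O(1/k)$. It is built from $\delta$, so it comes after the $\delta$-step and after removing $d(k)$. It is applied as $M^{(3)}e^{\ii g\sigma_3}$, not $e^{\ii tg\sigma_3}$.

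Without this step, after conjugating by $\delta^{-\sigma_3}$ the branch-cut jump is $\begin{pmatrix}0&q_-\delta^2/(\ii q_o)\\ \bar q_-\delta^{-2}/(\ii q_o)&0\end{pmatrix}$, which is not the constant $V_B$. So your claim that the jump on $B$ ``is exactly the plane-wave outer model'' does not hold for your sequence of transformations. Moreover, a phase-type $g$ normalized by $e^{\ii tg\sigma_3}$ cannot produce a finite, $t$-independent $g_\infty$. Yet $g_\infty$ is precisely the source of the leading term $q_-e^{2\ii g_\infty}$. The value $g(k_c)$ also enters $d_0$, and hence $Y_1$ and the coefficients $\alpha_1,\alpha_2$.

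\textbf{Smaller points.}
\begin{itemize}
\item You omit removing $d(k)$ from the diagonal of $V_1$; the paper does this with $d^{\pm\sigma_3/2}$.
\item Analyticity of $r$ near $\Sigma$ comes from the exponential decay~\eqref{ds-fnls-ic-space} via \cite[Lemma~3.1]{BM2017}, not from Assumption~\ref{As:gesa}, which only excludes zeros of $a$. So no $\bar\partial$ extension is needed.
\item Your normalization $\tfrac43\zeta^3+y\zeta$ does not reproduce the statement's $y$ or its Painlev\'e equation. You need $\Phi=z^3+yz$ with $z=(8\sqrt6/(9q_o))^{1/3}t^{1/3}(k-k_c)$.
\item The paper conjugates by $Y$ to strip $\arg r(k_c)$, so that $\tau=r_0>0$. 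This gives the Schwarz symmetry $W(z^*)^*=\sigma_2W\sigma_2$, which forces the structure of $M_1^X$ and, together with Miller's Lax equation, yields $\widehat{\mathcal V}'=\ii|\mathcal V|^2$. The integration constant is then fixed by Miller's $y\to-\infty$ asymptotics.
\item Your single treatment of all of $\cP$ (rays at fixed angles from $k_c$, with $y$ bounded) is a legitimate alternative to the paper's split into $\cP_\pm$. It avoids two separate local contours. It rests on the same bound $|e^{\pm2\ii t\theta}|\le Ce^{-ct|k-k_c|^3}$ as Lemma~\ref{L:exp-est-Pplus}, proved for both signs of $\xi-\xi_c$.
\end{itemize}
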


 \begin{remark}
Throughout this paper, we assume that the reflection coefficient satisfies
\(
r(k_c)\neq 0
\),
which ensures that the Painlev\'e region is nondegenerate.
\end{remark}

\begin{remark}
For the Painlev\'e region on the other side, one may first rescale the
original RH problem; see~\cite[Eq.~(4.5)]{BM2017} and the related discussion
therein.  After this rescaling, the subsequent analysis is completely parallel
to the computations presented in this paper.  For brevity, we omit the
details. We emphasize, however, that the final asymptotic formula has the same form, with the relevant quantities modified accordingly.
\end{remark}

\noindent{\bf Organization of the paper.}
Section~\ref{s:rhch} recalls the inverse-scattering and RH formulation for the IVP~\eqref{ds-fnls-ivp}. The Painlev\'e transition region $\cP$ is divided into two subregions, \(\cP_-\) and \(\cP_+\), which are analyzed separately. Section~\ref{S:dza} treats the case \((x,t)\in\cP_-\), where we construct the local Painlev\'e-II parametrix and formulate the associated small-norm RH problem. Section~\ref{S:dzcpplus} explains the modifications needed for the case \((x,t)\in\cP_+\). Appendix~\ref{App:PII} collects the Painlev\'e-II model problems and the tritronqu\'ee asymptotics used in the proof.

\noindent{\bf Notation.}  Throughout this paper, the following notation will be used.
\begin{itemize}
   \item[$\bullet$]  The
symbols \(C>0\) and \(c>0\) denote generic constants whose values may change
from line to line.
\item[$\bullet$]
Unless otherwise stated, $\log (z)$ always denotes
the principal branch of the logarithm.
\item[$\bullet$]
   The asterisk denotes complex conjugation.  For a complex-valued function $f(k)$, we use 
	\begin{equation*}
		\bar{f}:=f^*(k^*), \qquad k\in\mathbb{C}.
	\end{equation*}
	\item[$\bullet$] We  set
	 \(\R_+=(0,+\infty)\), \(\R_-=(-\infty,0)\), and
denote the upper and lower half-planes by \(\C_+\) and \(\C_-\). 
	\item[$\bullet$]As usual, the classical Pauli matrices $\{\sigma_j\}_{j=1,2,3}$ are defined by
	\begin{equation}\label{def:PauliM}
		\sigma_1:=\begin{pmatrix}0 & 1 \\ 1 & 0\end{pmatrix}, \quad
		\sigma_2:=\begin{pmatrix}0 & -\ii \\ \ii & 0\end{pmatrix}, \quad
		\sigma_3:=\begin{pmatrix}1 & 0 \\ 0 & -1\end{pmatrix}.
	\end{equation}
For a scalar function \(f(z)\), we set
\(
f^{\sigma_3}:=\diag(f,f^{-1}).
\)
	\item[$\bullet$]For a
matrix-valued function on a contour, all \(L^p\)-norms are understood
entrywise. For any smooth oriented curve $\Sigma$, the Cauchy operator $\mathcal{C}$ on $\Sigma$ is defined  by
	\begin{align*}
		\mathcal{C}f(k)=\frac{1}{2\pi \ii}\int_{\Sigma}\frac{f(\zeta)}{\zeta-k}\, d\zeta, \qquad  k\in\mathbb{C}\setminus \Sigma.
	\end{align*}
	Given a matrix-valued function $f \in L^p(\Sigma)$, $1\leq  p<\infty$,
	\begin{align}\label{def:opCpm}
		\mathcal{C}_\pm f(k):=\lim_{z'\to k\in\Sigma}\frac{1}{2\pi \ii}\int_{\Sigma}\frac{f(\zeta)}{\zeta-z'}\, d\zeta
	\end{align}
   stands for the positive/negative (according to the orientation of $\Sigma$) non-tangential boundary value of $\mathcal{C}f$.
\end{itemize}

\section{A Riemann--Hilbert formulation}\label{s:rhch}
In this section, we review the IST of IVP~\eqref{ds-fnls-ivp}. Since these results have been well established in Ref.~\cite{BM2017}, we will omit the proof.

The focusing NLS equation~\eqref{ds-fnls-ivp-fnls-eq} is a
completely integrable system.  It can be written as the
compatibility condition
\[
X_t-T_x+[X,T]=0
\]
of the Lax pair~\cite{zs1971,APT2004}
\begin{equation}\label{ds-fnls-lp}
\Psi_x=X\Psi,
\qquad
\Psi_t=T\Psi,
\end{equation}
where \(\Psi=\Psi(x,t,k)\) is a \(2\times2\) matrix-valued function, and
\begin{equation}\label{ds-fnls-lp-XT}
X=\mathrm{i}k\sigma_3+ Q,
\qquad
T=-2\mathrm{i}k^2\sigma_3
+\mathrm{i}\sigma_3\left(Q_x-Q^2-q_o^2I\right)
-2k Q.
\end{equation}
Here \(k\in\mathbb C\),
\begin{equation}\label{ds-fnls-lp-Q}
Q=
\begin{pmatrix}
0&q\\
-\bar q&0
\end{pmatrix},
\qquad
I=
\begin{pmatrix}
1&0\\
0&1
\end{pmatrix}.
\end{equation}

Let
\[
X_\pm=\lim_{x\to\pm\infty}X(x,t,k),
\qquad
T_\pm=\lim_{x\to\pm\infty}T(x,t,k).
\]
The eigenvector matrix of \(X_\pm\) can be chosen as
\begin{equation}\label{eq:Epm}
\mathcal E_\pm(k)=
\begin{pmatrix}
1&
\dfrac{\mathrm{i}(\lambda-k)}{\overline{q}_\pm}
\\[1.2ex]
\dfrac{\mathrm{i}(\lambda-k)}{q_\pm}
&
1
\end{pmatrix},
\end{equation}
where
\begin{equation}\label{eq:lambda-def}
\lambda(k)=\left(k^2+q_o^2\right)^{1/2}.
\end{equation}
The corresponding eigenvalues are \(\pm\mathrm{i}\lambda(k)\), and
we take the branch cut to be
\begin{equation}\label{eq:branch-cut}
B=\mathrm{i}[-q_o,q_o].
\end{equation}
The function \(\lambda(k)\) is chosen to be single-valued in
\(\mathbb C\setminus B\), with boundary values on \(B\) taken from the right.
Equivalently,
\begin{equation}\label{eq:lambda-branch}
\lambda(k)=
\begin{cases}
\sqrt{k^2+q_o^2}, & k\in\mathbb R_+\cup B,\\[0.6ex]
-\sqrt{k^2+q_o^2}, & k\in\mathbb R_-,
\end{cases}
\end{equation}
where the square root denotes the principal branch of the real square root.
The Jost matrices are defined as simultaneous solutions of the Lax
pair~\eqref{ds-fnls-lp} satisfying the boundary conditions
\begin{equation}
\Psi_\pm(x,t,k)
=
\mathcal E_{\pm}(k)
e^{\ii\theta(\xi,k)t\sigma_3}
\bigl[I+o(1)\bigr],
\qquad x\to\pm\infty,
\label{eq:Jost}
\end{equation}
where
\[
\theta(\xi,k)=\lambda(k)(\xi-2k).
\]
Then we define the spectral coefficient \(a(k)\), together with its
Schwarz conjugate \(\bar a(k)\), by
\begin{equation}\label{ds-a-ab-def}
a(k)
=
\frac{\operatorname{Wr}\left[\Psi_{-1}(x,t,k),\Psi_{+2}(x,t,k)\right]}
{d(k)},
\qquad
\bar a(k)
=
\frac{\operatorname{Wr}\left[\Psi_{+1}(x,t,k),\Psi_{-2}(x,t,k)\right]}
{d(k)}.
\end{equation}
Here \(\operatorname{Wr}\) denotes the Wronskian determinant, and
\begin{equation}\label{ds-d-def}
d(k):=\frac{2\lambda(k)}{\lambda(k)+k}.
\end{equation}
The Wronskian determinants appearing in~\eqref{ds-a-ab-def} are independent of
\(x\) and \(t\). Consequently, both \(a\) and \(\bar a\) are functions of the
spectral parameter \(k\) only.

Following the idea of~\cite{BM2017}, we introduce the following sectionally
meromorphic matrix-valued function \(M\):
\begin{equation}\label{ds-mrh}
M(x,t,k)
=
\begin{cases}
\left(
\dfrac{\Psi_{+1}(x,t,k)}{\bar{a}(k)\,d(k)},
\ \Psi_{-2}(x,t,k)
\right)
e^{-\mathrm{i}\theta(\xi,k)t\sigma_3},
&
k\in\mathbb C_+\setminus B^+,
\\[2.2ex]
\left(
\Psi_{-1}(x,t,k),
\ \dfrac{\Psi_{+2}(x,t,k)}{a(k)d(k)}
\right)
e^{-\mathrm{i}\theta(\xi,k)t\sigma_3},
&
k\in\mathbb C_-\setminus B^-,
\end{cases}
\end{equation}
where
$$
B^+:= \ii\left[0, q_o\right], 
\quad
B^- := \ii\left[-q_o, 0\right].
$$

Following~\cite{BM2017}, we impose the following spectral assumption.

\begin{assume}\label{As:gesa}
Assume that
\[
a(k)\neq 0,
\qquad
k\in \mathbb C_-\cup\Sigma,
\]
where
\(
\Sigma=\mathbb R\cup B.
\)
\end{assume}

Under Assumption~\ref{As:gesa}, it was shown in~\cite{BM2017} that
\(M(x,t,k)\) is analytic for \(k\in\mathbb C\setminus\Sigma\) and has jumps
across \(\Sigma\). More precisely, \(M\) satisfies the following
RH problem:
\begin{subequations}\label{ds-rhp-cpam}
\begin{align}
M_+(x,t,k)&=M_-(x,t,k)V_1(x,t,k),
&& k\in\mathbb R,
\\
M_+(x,t,k)&=M_-(x,t,k)V_2(x,t,k),
&& k\in B^+,
\\
M_+(x,t,k)&=M_-(x,t,k)V_3(x,t,k),
&& k\in B^-,
\\
M(x,t,k)&=I+\mathcal O\left(\frac{1}{k}\right),
&& k\to\infty.
\end{align}
\end{subequations}
The jump matrices on the three components \(\mathbb R\), \(B^+\), and \(B^-\)
of the continuous spectrum \(\Sigma\) are given by
\begin{subequations}\label{ds-rhp-cpam-j}
\begin{align}
V_1(x,t,k)
&=
\begin{pmatrix}
\dfrac{1+r(k)\bar r(k)}{d(k)}
&
\bar r(k)e^{2\mathrm{i}\theta(\xi,k)t}
\\[1.2ex]
r(k)e^{-2\mathrm{i}\theta(\xi,k)t}
&
d(k)
\end{pmatrix},
\label{ds-rhp-cpam-j1}
\\[2ex]
V_2(x,t,k)
&=
\begin{pmatrix}
-\tfrac{\lambda(k)-k}{\mathrm{i}q_-}\,
\bar r(k)e^{2\mathrm{i}\theta(\xi,k)t}
&
\tfrac{2\lambda(k)}{\mathrm{i}\bar q_-}
\\[2ex]
\dfrac{\bar q_-}{2\mathrm{i}\lambda(k)}
\left[1+r(k)\bar r(k)\right]
&
-\tfrac{\lambda(k)+k}{\mathrm{i}\bar q_-}\,
r(k)e^{-2\mathrm{i}\theta(\xi,k)t}
\end{pmatrix},
\label{ds-rhp-cpam-j2}
\\[2ex]
V_3(x,t,k)
&=
\begin{pmatrix}
\tfrac{\lambda(k)+k}{\mathrm{i}q_-}\,
\bar r(k)e^{2\mathrm{i}\theta(\xi,k)t}
&
\tfrac{q_-}{2\mathrm{i}\lambda(k)}
\left[1+r(k)\bar r(k)\right]
\\[2ex]
\tfrac{2\lambda(k)}{\mathrm{i}q_-}
&
\tfrac{\lambda(k)-k}{\mathrm{i}\bar q_-}\,
r(k)e^{-2\mathrm{i}\theta(\xi,k)t}
\end{pmatrix}.
\label{ds-rhp-cpam-j3}
\end{align}
\end{subequations}

\begin{figure}[h]
\centering
\begin{tikzpicture}[
    scale=1.15,
    >=latex,
    line cap=round,
    line join=round,
    every node/.style={font=\small},
    axis/.style={draw=black, line width=0.8pt},
    contour/.style={draw=black, line width=1.8pt},
    arrowcontour/.style={
        contour,
        ->,
        >=Stealth
    },
    realarrow/.style={
        draw=black,
        line width=0.8pt,
        ->,
        >=Stealth
    }
]

\coordinate (O) at (0,0);
\coordinate (Iq) at (0,1.35);
\coordinate (NIq) at (0,-1.35);

\draw[axis] (-2.35,0) -- (2.35,0);
\draw[axis] (0,-1.85) -- (0,1.85);

\draw[realarrow] (-2.25,0) -- (-1.05,0);
\draw[realarrow] (0.95,0) -- (2.25,0);

\draw[contour] (0,-1.35) -- (0,1.35);

\draw[arrowcontour] (0,0.05) -- (0,1.15);
\draw[arrowcontour] (0,-1.25) -- (0,-0.10);

\node at (0.38,0.40) {$B^+$};
\node at (0.38,-0.40) {$B^-$};

\node at (2.28,-0.22) {$\mathbb R$};

\node[left] at (0,1.35) {$\mathrm{i}q_o$};
\node[left] at (0,-1.35) {$-\mathrm{i}q_o$};

\end{tikzpicture}
\caption{The contour \(\Sigma=\mathbb R\cup B\).}
\label{fig:continuous-spectrum-jumps}
\end{figure}
\noindent See Figure~\ref{fig:continuous-spectrum-jumps} for the orientation of these contours.
The reflection coefficient \(r\) is defined by
\begin{equation}\label{ds-r-coef-def}
r(k)
=
-\frac{b(k)}{\bar a(k)},
\qquad
b(k)
:=
\frac{
\operatorname{Wr}\left[\Psi_{+1}(x,t,k),\Psi_{-1}(x,t,k)\right]
}
{d(k)}.
\end{equation}
By~\cite[Lemma~3.1]{BM2017}, for the exponentially decaying initial data under
consideration, the reflection coefficient can be analytically continued to a
small neighborhood of the continuous spectrum \(\Sigma\).  This property provides the analytic foundation for the contour deformations carried out in
the following sections.

The \(x\)-part of the Lax pair~\eqref{ds-fnls-lp}, together with the
definition of \(M\) in~\eqref{ds-mrh} and the normalization condition, yields
the solution of the IVP~\eqref{ds-fnls-ivp} through the reconstruction formula
\begin{equation}\label{ds-q-recon-n}
q(x,t)
=
-2\mathrm{i}\lim_{k\to\infty}kM_{12}(x,t,k).
\end{equation}
Therefore, the long-time asymptotic behavior of the solution \(q\) of the
focusing NLS IVP~\eqref{ds-fnls-ivp} can be obtained equivalently by analyzing
the corresponding long-time behavior of the solution \(M\) of the
RH problem~\eqref{ds-rhp-cpam}.

\section{Long--time asymptotics in the transition region \texorpdfstring{$\cP_-$}{P-}}\label{S:dza}

It is convenient to divide the Painlev\'e region $\cP$ into
two subregions and treat them separately. We write
\[
\cP=\cP_+\cup\cP_-,
\]
where
\begin{equation}\label{E:Ppm}
\cP_+
:=
\cP\cap\{\xi\ge -4\sqrt{2}\,q_o\},
\qquad
\cP_-
:=
\cP\cap\{\xi\le -4\sqrt{2}\,q_o\}.
\end{equation}
In this section, we analyze the long-time behavior of the RH
problem~\eqref{ds-rhp-cpam} in the  region \(\cP_-\) by means of the
Deift--Zhou nonlinear steepest descent method.  The main idea of the Deift--Zhou steepest descent analysis is to transform the
original RH problem, through a sequence of exact and invertible
transformations, into a solvable model RH problem together with
an error-estimate problem. To construct these transformations, we first need to analyze the phase function
\(\theta(\xi,k)\).

Let \(\xi_c=-4\sqrt{2}\,q_o\). For $\xi < \xi_c$, the two stationary points of the phase function
\(\theta(\xi,k)\) are given by
\begin{equation}
 k_1(\xi)
 =
 \frac{1}{8}\left(\xi-\sqrt{\xi^2-\xi_c^2}\right),
 \qquad
 k_2(\xi)
 =
 \frac{1}{8}\left(\xi+\sqrt{\xi^2-\xi_c^2}\right).
\label{eq:stationary}
\end{equation}
When \(\xi\le \xi_c\), both stationary points lie on the real axis. At the
critical value \(\xi=\xi_c\), they coalesce at
\[
k_c=-\frac{q_o}{\sqrt{2}}.
\]
For \(\xi_c<\xi<0\), the two stationary points move off the real axis and form
a complex conjugate pair.  The sign distribution of \(\mathrm{Re}(\ii\theta)\)  is shown in Figure~\ref{sign-structure-f}.

\begin{figure}[h]
\centering
\includegraphics[width=0.31\textwidth]{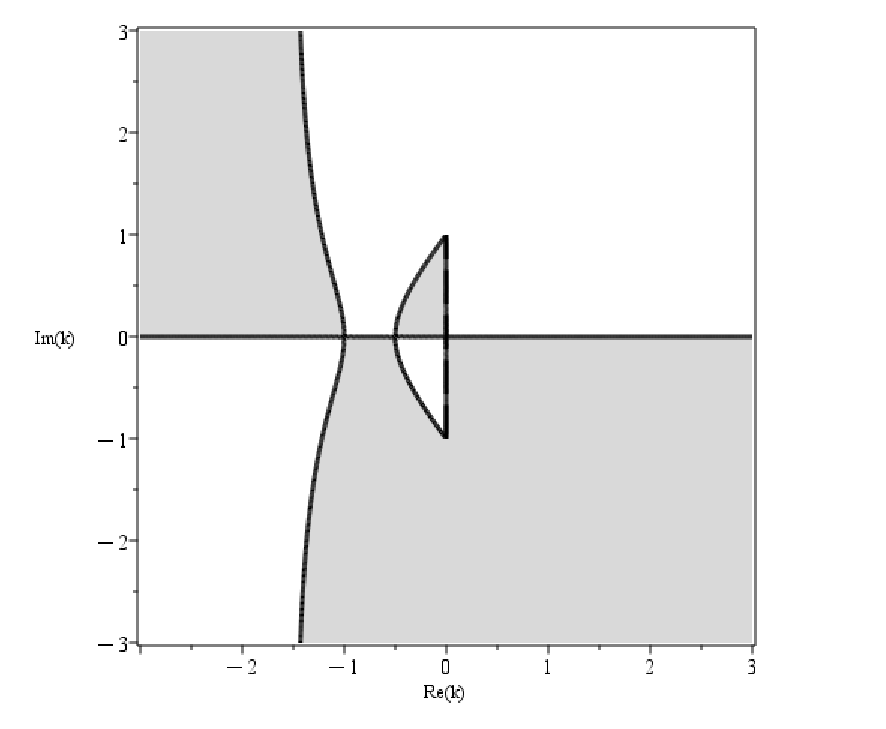}
\hfill
\includegraphics[width=0.31\textwidth]{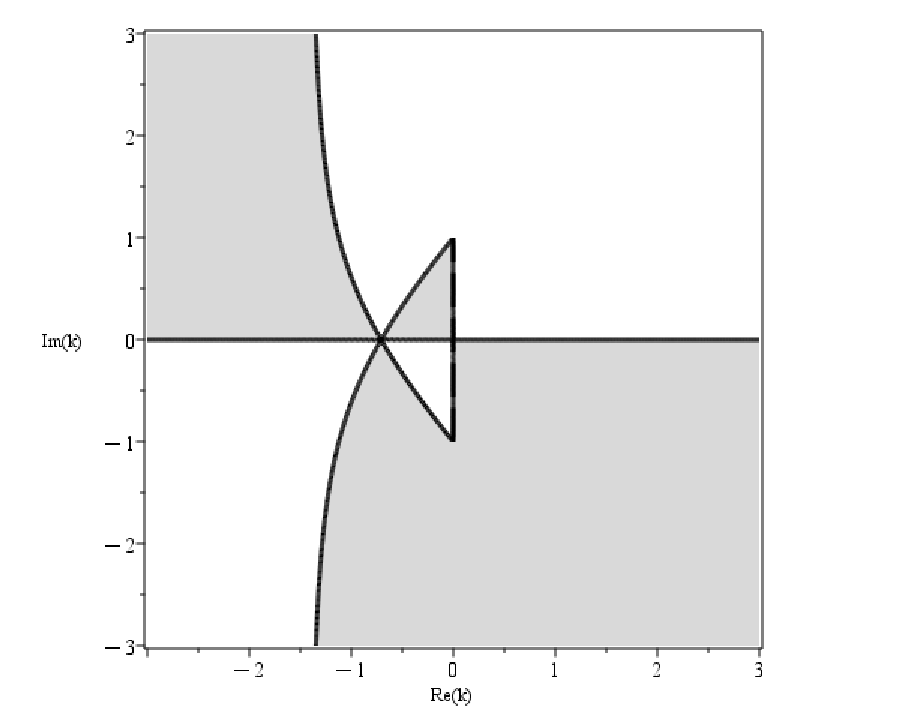}
\hfill
\includegraphics[width=0.31\textwidth]{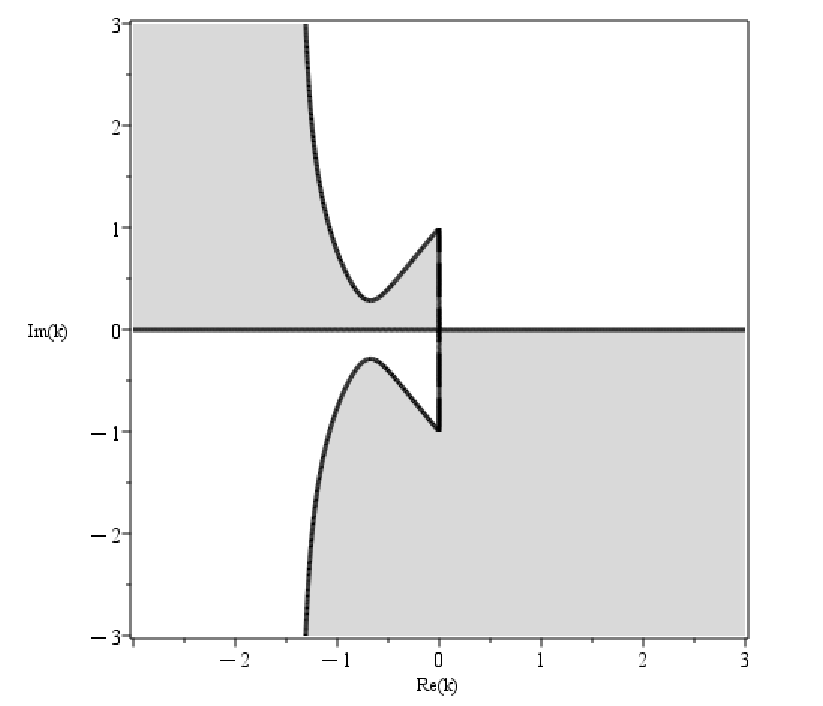}
\caption{
Sign structure of \(\mathrm{Re}(\mathrm{i}\theta)\) for
$q_o=1$, \(\xi=-6\), \(\xi=\xi_c\), and \(\xi=-5.5\), respectively.
The gray regions indicate \(\mathrm{Re}(\mathrm{i}\theta)<0\), while the
white regions indicate \(\mathrm{Re}(\mathrm{i}\theta)>0\).
}
\label{sign-structure-f}
\end{figure}

\subsection{Opening of lenses}

The purpose of this subsection is to carry out a sequence of preliminary
deformations of the original RH problem.  These transformations are very close
to those used in the plane-wave region, with only minor modifications.  We
therefore present the construction rather briefly and refer the reader
to~\cite[Section~4]{BM2017} for the corresponding standard argument.

The exponential decay assumption imposed on the initial datum in~\eqref{ds-fnls-ic-space}
ensures that the reflection coefficient admits an analytic continuation to a
small neighborhood of the continuous spectrum; see, for example,
\cite[Lemma~3.1]{BM2017}.  Throughout the following deformations, all contours
obtained by moving off the continuous spectrum are chosen inside this
neighborhood.

\vskip 2mm
\noindent\textbf{First deformation.} 
For \(\xi<\xi_c\), we first use the following factorizations of the jump
matrix on the real axis. More precisely,
\[
V_1
=
\begin{cases}
V_2^{(1)}V_0^{(1)}V_1^{(1)}, & \operatorname{Re} k<k_1,\\[0.6ex]
V_4^{(1)}V_3^{(1)}, & \operatorname{Re} k>k_c,
\end{cases}
\]
where
\[
V_0^{(1)}
=
\begin{pmatrix}
1+r\bar r & 0\\[0.6ex]
0 & \dfrac{1}{1+r\bar r}
\end{pmatrix},
\]
\[
V_1^{(1)}
=
\begin{pmatrix}
d^{-1/2}
&
\dfrac{d^{1/2}\bar r\,e^{2\mathrm{i}\theta t}}{1+r\bar r}
\\[1.2ex]
0
&
d^{1/2}
\end{pmatrix},
\qquad
V_2^{(1)}
=
\begin{pmatrix}
d^{-1/2}
&
0
\\[1.2ex]
\dfrac{d^{1/2}r\,e^{-2\mathrm{i}\theta t}}{1+r\bar r}
&
d^{1/2}
\end{pmatrix},
\]
\[
V_3^{(1)}
=
\begin{pmatrix}
d^{-1/2}
&
0
\\[1.2ex]
d^{-1/2}r\,e^{-2\mathrm{i}\theta t}
&
d^{1/2}
\end{pmatrix},
\qquad
V_4^{(1)}
=
\begin{pmatrix}
d^{-1/2}
&
d^{-1/2}\bar r\,e^{2\mathrm{i}\theta t}
\\[1.2ex]
0
&
d^{1/2}
\end{pmatrix}.
\]
We then introduce the first transformation as follows:
\be  \label{E:firsttrans}
M^{(1)}(x,t,k)
=
M(x,t,k)
\begin{cases}
\bigl(V_3^{(1)}\bigr)^{-1}, & k\in \mathcal R_1^{(1)},\\[0.6ex]
\bigl(V_1^{(1)}\bigr)^{-1}, & k\in \mathcal R_2^{(1)},\\[0.6ex]
V_2^{(1)}, & k\in \mathcal R_3^{(1)},\\[0.6ex]
V_4^{(1)}, & k\in \mathcal R_4^{(1)},
\end{cases}
\ee
where the regions \(\{\mathcal R_j^{(1)}\}_{j=1}^4\) are those shown in
Figure~\ref{F:Sigma1}.

\begin{figure}[htbp]
\centering
\begin{tikzpicture}[
    x=1cm,
    y=1cm,
    scale=0.8,
    transform shape,
    flow/.style={
        draw=black,
        line width=1.15pt,
        line cap=round,
        line join=round,
        postaction={decorate},
        decoration={
            markings,
            mark=at position #1 with {\arrow{Stealth}}
        }
    },
    axisflow/.style={
        draw=black,
        dashed,
        line width=1.15pt,
        line cap=round,
        line join=round,
        postaction={decorate},
        decoration={
            markings,
            mark=at position #1 with {\arrow{Stealth}}
        }
    },
    twoflow/.style args={#1 and #2}{
        draw=black,
        line width=1.15pt,
        line cap=round,
        line join=round,
        postaction={decorate},
        decoration={
            markings,
            mark=at position #1 with {\arrow{Stealth}},
            mark=at position #2 with {\arrow{Stealth}}
        }
    },
    regionlabel/.style={font=\small},
    contourlabel/.style={font=\small},
    pointlabel/.style={font=\small}
]

  \coordinate (Kone) at (-0.625,0);
  \coordinate (Kc)   at ( 0.70,0);

  \draw[flow=.43]
    (-6.10, 1.43)
      .. controls (-4.65, 1.48) and (-2.70, 1.08) .. (Kone);
  \draw[flow=.43]
    (-6.10,-1.43)
      .. controls (-4.65,-1.48) and (-2.70,-1.08) .. (Kone);

  \draw[flow=.37] (-6.10,0) -- (Kone);
  \draw[flow=.55] (Kone) -- (Kc);

  \draw[axisflow=.86] (Kc) -- (7.45,0);

  \draw[draw=black,line width=1.15pt,line cap=round] (Kc)--(0.70,0.50);
  \draw[draw=black,line width=1.15pt,line cap=round] (Kc)--(0.70,-0.50);
  \draw[twoflow=.35 and .75]
    (0.70,0.50)
      .. controls (0.70, 1.5) and (1.5, 1.95) .. (2.5, 1.95)
      .. controls (3.5, 1.95) .. (5, 1.95)
      .. controls (6, 1.95) and (6, 0.6) .. (7.45, 0.6);
  \draw[twoflow=.35 and .75]
    (0.70,-0.50)
      .. controls (0.70, -1.5) and (1.5, -1.95) .. (2.5, -1.95)
      .. controls (3.5, -1.95) .. (5, -1.95)
      .. controls (6, -1.95) and (6, -0.6) .. (7.45, -0.6);

  \draw[twoflow=.30 and .72] (4.40,-1.55) -- (4.40,1.55);

  \node[regionlabel] at (-4.90, 0.72) {$\mathcal{R}^{(1)}_2$};
  \node[regionlabel] at (-4.90,-0.72) {$\mathcal{R}^{(1)}_3$};
  \node[regionlabel] at ( 3.30, 0.78) {$\mathcal{R}^{(1)}_1$};
  \node[regionlabel] at ( 5.70, 0.78) {$\mathcal{R}^{(1)}_1$};
  \node[regionlabel] at ( 3.30,-0.78) {$\mathcal{R}^{(1)}_4$};
  \node[regionlabel] at ( 5.70,-0.78) {$\mathcal{R}^{(1)}_4$};

  \node[contourlabel] at (-2.75, 1.24) {$1$};
  \node[contourlabel] at (-2.75,-1.24) {$2$};
  \node[contourlabel] at ( 3.95, 2.55) {$3$};
  \node[contourlabel] at ( 3.95,-2.55) {$4$};
  \node[contourlabel] at ( 0.10, 0.24) {$5$};
  \node[contourlabel] at ( 4.62, 0.25) {$B$};

  \node[pointlabel,below=2pt] at (Kone) {$k_1$};
  \node[pointlabel,below=2pt,xshift=8pt] at (Kc) {$k_c$};

\end{tikzpicture}
\caption{The jump contour \(\Sigma^{(1)}\) for $M^{(1)}$ and the regions \(\{\mathcal R_j^{(1)}\}_{j=1}^4\).}
\label{F:Sigma1}
\end{figure}

 We denote the new jump contour by
\(\Sigma^{(1)}\), and write the corresponding jump matrices as \(V^{(1)}\).
On each component \(\Sigma_j^{(1)}\), the jump matrix is denoted by
\(V_j^{(1)}\). The explicit expressions for \(V_j^{(1)}\), \(j=1,2,3,4\), are
given above. On the branch cut \(B\), the jump matrix is
\be \label{E:defVB}
V_B^{(1)}=V_B
=
\begin{pmatrix}
0 & \dfrac{q_-}{\mathrm{i}q_o}\\[1.2ex]
\dfrac{\bar q_-}{\mathrm{i}q_o} & 0
\end{pmatrix}.
\ee
On $(-\infty,k_1)$, $V^{(1)}(k)=V^{(1)}_0$. Moreover, on the remaining part of the real axis  we have
\(
V^{(1)}_5(k)=V_1(k).
\)

\vskip 2mm
\noindent\textbf{Second deformation.} 
The second transformation is designed to remove the jump on
\((-\infty,k_1)\). Define the scalar function \(\delta\) by
\[
\delta_+(k)
=
\delta_-(k)\bigl(1+r(k)\bar r(k)\bigr),
\qquad
k\in(-\infty,k_c),
\]
together with the normalization
\[
\delta(k)=1+ \mathcal O(k^{-1}),
\qquad
k\to\infty.
\]
The Plemelj formulae give the explicit representation 
\be \label{ds-delta-def}
\delta( k) = \exp\left\{\frac{1}{2\ii\pi}\int_{-\infty}^{k_c} \frac{\log \left[1+r(s)\bar r(s)\right]}{s-k} \, ds\right\},
\quad k\notin (-\infty, k_c).
\ee
We then set
\[
M^{(2)}(x,t,k)
=
M^{(1)}(x,t,k)\delta(k)^{-\sigma_3},
\qquad
k\in\mathbb C.
\]

\begin{figure}[htbp]
\centering
\begin{tikzpicture}[
    x=1cm,
    y=1cm,
    scale=0.8,
    transform shape,
    flow/.style={
        draw=black,
        line width=1.15pt,
        line cap=round,
        line join=round,
        postaction={decorate},
        decoration={
            markings,
            mark=at position #1 with {\arrow{Stealth}}
        }
    },
    axisflow/.style={
        draw=black,
        dashed,
        line width=1.15pt,
        line cap=round,
        line join=round,
        postaction={decorate},
        decoration={
            markings,
            mark=at position #1 with {\arrow{Stealth}}
        }
    },
    twoflow/.style args={#1 and #2}{
        draw=black,
        line width=1.15pt,
        line cap=round,
        line join=round,
        postaction={decorate},
        decoration={
            markings,
            mark=at position #1 with {\arrow{Stealth}},
            mark=at position #2 with {\arrow{Stealth}}
        }
    },
    regionlabel/.style={font=\small},
    contourlabel/.style={font=\small},
    pointlabel/.style={font=\small}
]

  \coordinate (Kone) at (-0.625,0);
  \coordinate (Kc)   at ( 0.70,0);

  \draw[flow=.43]
    (-6.10, 1.43)
      .. controls (-4.65, 1.48) and (-2.70, 1.08) .. (Kone);
  \draw[flow=.43]
    (-6.10,-1.43)
      .. controls (-4.65,-1.48) and (-2.70,-1.08) .. (Kone);

  \draw[axisflow=.55] (-6.10,0) -- (Kone);
  \draw[flow=.55] (Kone) -- (Kc);

  \draw[axisflow=.86] (Kc) -- (7.45,0);

  \draw[draw=black,line width=1.15pt,line cap=round] (Kc)--(0.70,0.50);
  \draw[draw=black,line width=1.15pt,line cap=round] (Kc)--(0.70,-0.50);
  \draw[twoflow=.35 and .75]
    (0.70,0.50)
      .. controls (0.70, 1.5) and (1.5, 1.95) .. (2.5, 1.95)
      .. controls (3.5, 1.95) .. (5, 1.95)
      .. controls (6, 1.95) and (6, 0.6) .. (7.45, 0.6);
  \draw[twoflow=.35 and .75]
    (0.70,-0.50)
      .. controls (0.70, -1.5) and (1.5, -1.95) .. (2.5, -1.95)
      .. controls (3.5, -1.95) .. (5, -1.95)
      .. controls (6, -1.95) and (6, -0.6) .. (7.45, -0.6);

  \draw[twoflow=.30 and .72] (4.40,-1.55) -- (4.40,1.55);

  \node[contourlabel] at (-2.75, 1.24) {$1$};
  \node[contourlabel] at (-2.75,-1.24) {$2$};
  \node[contourlabel] at ( 3.95, 2.55) {$3$};
  \node[contourlabel] at ( 3.95,-2.55) {$4$};
  \node[contourlabel] at ( 0.10, 0.24) {$5$};
  \node[contourlabel] at ( 4.62, 0.25) {$B$};

  \node[pointlabel,below=2pt] at (Kone) {$k_1$};
  \node[pointlabel,below=2pt,xshift=8pt] at (Kc) {$k_c$};

\end{tikzpicture}
\caption{The jump contour \(\Sigma^{(2)}\) for \(M^{(2)}\).}
\label{F:Sigma2}
\end{figure}
Let \(V^{(2)}\) denote the corresponding jump matrix. With this transformation,
the jump on \((-\infty,k_1)\) is removed. The new jump contour is shown in
Figure~\ref{F:Sigma2}. The remaining jump matrices are given by
\begin{align*}
V_B^{(2)}
&=
\begin{pmatrix}
0 & \dfrac{q_-}{\mathrm{i}q_o}\delta^2 \\[1ex]
\dfrac{\bar q_-}{\mathrm{i}q_o}\delta^{-2} & 0
\end{pmatrix},
\qquad
V_1^{(2)}
=
\begin{pmatrix}
d^{-\frac12}
&
\dfrac{d^{\frac12}\bar r e^{2\mathrm{i}\theta t}}{1+r\bar r}\delta^2
\\[1.2ex]
0
&
d^{\frac12}
\end{pmatrix},
\\[2ex]
V_2^{(2)}
&=
\begin{pmatrix}
d^{-\frac12}
&
0
\\[1.2ex]
\dfrac{d^{\frac12}r e^{-2\mathrm{i}\theta t}}{1+r\bar r}\delta^{-2}
&
d^{\frac12}
\end{pmatrix},
\qquad
V_3^{(2)}
=
\begin{pmatrix}
d^{-\frac12}
&
0
\\[1.2ex]
d^{-\frac12}r e^{-2\mathrm{i}\theta t}\delta^{-2}
&
d^{\frac12}
\end{pmatrix},
\\[2ex]
V_4^{(2)}
&=
\begin{pmatrix}
d^{-\frac12}
&
d^{-\frac12}\bar r e^{2\mathrm{i}\theta t}\delta^2
\\[1.2ex]
0
&
d^{\frac12}
\end{pmatrix},
\qquad
V_5^{(2)}
=
\delta_-^{\sigma_3}V_5^{(1)}\delta_+^{-\sigma_3}.
\end{align*}

\vskip 2mm
\noindent\textbf{Third deformation.}
We eliminate the factor $d(k)$ from the jump matrices by introducing a new unknown $M^{(3)}$ in terms of $M^{(2)}$. Define
\[
\mathcal R_1^{(3)}
:=
\mathbb C_+
\setminus
\bigl(\mathcal R_1^{(1)}\cup\mathcal R_2^{(1)}\bigr),
\qquad
\mathcal R_2^{(3)}
:=
\mathbb C_-
\setminus
\bigl(\mathcal R_3^{(1)}\cup\mathcal R_4^{(1)}\bigr).
\]
We set
\[
M^{(3)}(x,t,k)
=
M^{(2)}(x,t,k)D(k)=M^{(2)}(x,t,k) \times
\begin{cases}
d(k)^{\sigma_3/2}, & k\in \mathcal R_1^{(3)},\\[0.6ex]
d(k)^{-\sigma_3/2}, & k\in \mathcal R_2^{(3)}.
\end{cases}
\]
The jump contour \(\Sigma^{(3)}\) remains unchanged. We denote the corresponding
jump matrix by \(V^{(3)}\). Its explicit form will be recorded below.

\begin{equation}
\begin{gathered}
V_1^{(3)}=
\begin{pmatrix}
1&\dfrac{\bar r\e^{2\ii\theta t}}{1+r\bar r}\,\delta^2\\[3mm]
0&1
\end{pmatrix},
\qquad
V_2^{(3)}=
\begin{pmatrix}
1&0\\[2mm]
\dfrac{r\e^{-2\ii\theta t}}{1+r\bar r}\,\delta^{-2}&1
\end{pmatrix},\\[3mm]
V_3^{(3)}=
\begin{pmatrix}
1&0\\[1mm]
r\e^{-2\ii\theta t}\delta^{-2}&1
\end{pmatrix},
\qquad
V_4^{(3)}=
\begin{pmatrix}
1&\bar r\e^{2\ii\theta t}\delta^2\\[1mm]
0&1
\end{pmatrix},\\[3mm]
V_5^{(3)} = \begin{pmatrix} 1 & \tfrac{\bar r e^{2\mathrm{i}\theta t}}{1+r\bar r}  (\delta_+)^2 \\[1.4ex] \tfrac{r  e^{-2\mathrm{i}\theta t} }{1+r\bar r} \tfrac{1}{(\delta_-)^2} & 1+r\bar r \end{pmatrix},\qquad
V_B^{(3)}=
\begin{pmatrix}
0&\dfrac{q_-}{\ii q_o}\,\delta^2\\[2mm]
\dfrac{\bar q_-}{\ii q_o}\,\delta^{-2}&0
\end{pmatrix}.
\end{gathered}  \label{eq:third-jumps}
\end{equation}

\vskip 2mm
\noindent\textbf{Fourth deformation.}
Our final goal is to convert the jump along the branch cut $B$ into the constant matrix $V_B$ given by \eqref{E:defVB}. This can be achieved by means of the global transformation
\be \label{ds-esc-n4-def-pw1}
M^{(4)}(x, t, k) = M^{(3)}(x, t, k) e^{\ii g( k)\sigma_3},
\ee
where the function $g(k)$ is analytic in $\mathbb C\setminus B$ and satisfies the jump condition
\be \label{ds-esc-jump-g-pw1}
 e^{\ii(g^+ + g^-)} = \delta^2,\quad k\in B,
\ee
and the normalization condition
\be \label{ds-esc-g-rhp-pw1}
\frac{g(k)}{\lambda(k)}
=
\mathcal O\left(\frac 1k\right), \quad k \to\infty.
\ee
Indeed, the jump condition \eqref{ds-esc-jump-g-pw1} implies that the jump of $M^{(4)}$ along $B$  is precisely $V_B$. Equations \eqref{ds-esc-jump-g-pw1} and \eqref{ds-esc-g-rhp-pw1} formulate a RH problem for $g$, which can be solved explicitly to yield 
\be
 \label{ds-esc-g-def-pw1}
g( k)
=
\frac{\lambda(k)}{2 \ii \pi^2}
\int_{\zeta\in B} \frac{1}{\lambda(\zeta)\left(\zeta-k\right)} \int_{-\infty}^{k_c} \frac{\log  \left[1+r(s)\bar r(s)\right]}{s-\zeta}\, ds d\zeta, \quad k\notin B.
\ee
Notice that, after the above transformations, \(M^{(4)}\) is no longer
normalized to the identity as \(k\to\infty\). Instead, we have
\[
M^{(4)}(x,t,k)
=
\left[I+ \mathcal O\left(k^{-1}\right)\right]e^{\mathrm{i}g_\infty\sigma_3},
\qquad
k\to\infty,
\]
where
\be \label{E:defginfty}
g_\infty
:=
\lim_{k\to\infty}g(k)
=
-\frac{1}{2\mathrm{i}\pi^2}
\int_{\zeta \in B}
\frac{1}{\lambda(\zeta)}
\int_{-\infty}^{k_c}
\frac{\log\left(1+|r(s)|^2\right)}{s-\zeta}
\,ds\,d\zeta .
\ee
A direct calculation shows that
\(
g_\infty\in\mathbb R.
\)

The jump contour for  \(M^{(4)}\) remains unchanged. The jump matrix $V^{(4)}$ is as follows:
\begin{align*}
V_1^{(4)}&=
\begin{pmatrix}
1&\dfrac{\bar r\e^{2 \ii(\theta t-g)}}{1+r\bar r}\,\delta^2\\[3mm]
0&1
\end{pmatrix},
\qquad
V_2^{(4)}=
\begin{pmatrix}
1&0\\[2mm]
\dfrac{r\e^{-2 \ii(\theta t-g)}}{1+r\bar r}\,\delta^{-2}&1
\end{pmatrix}, \\[3mm]
V_3^{(4)}&=
\begin{pmatrix}
1&0\\[1mm]
r\e^{-2 \ii(\theta t-g)}\delta^{-2}&1
\end{pmatrix},
\qquad
V_4^{(4)}=
\begin{pmatrix}
1&\bar r\e^{2 \ii(\theta t-g)}\delta^2\\[1mm]
0&1
\end{pmatrix},\\[3mm]
V_5^{(4)}& = \begin{pmatrix} 1 & \tfrac{\bar r e^{2\mathrm{i}(\theta t-g)}}{1+r\bar r}  (\delta_+)^2 \\[1.4ex] \tfrac{r  e^{-2\mathrm{i}(\theta t -g) } }{1+r\bar r} \tfrac{1}{(\delta_-)^2} & 1+r\bar r \end{pmatrix},\qquad
V_B^{(4)}=
\begin{pmatrix}
0 & \dfrac{q_-}{\mathrm{i}q_o}\\[1.2ex]
\dfrac{\bar q_-}{\mathrm{i}q_o} & 0
\end{pmatrix}.
\end{align*}

\subsection{The local parametrix and outer parametrix}\label{sub:lG}

In the previous subsection, after a sequence of contour deformations, all jump matrices were made exponentially close to the identity as \(t\to\infty\), except for the jumps on the branch cut and the jumps in a small neighborhood of the critical point \(k_c\). Therefore, two model constructions are needed. First, we construct a local parametrix in a neighborhood of \(k_c\).  Second, we construct an outer parametrix which absorbs the nontrivial jump on the branch cut.

\vskip 2mm
\noindent\textbf{Local parametrix.} 
We first record the Taylor expansion of the phase function near the critical
point. Recall that
\[
\xi_c=-4\sqrt{2}\,q_o,\qquad
k_c=-\frac{q_o}{\sqrt{2}}.
\]
Then, as \(k\to k_c\),
\[
\begin{aligned}
\theta(\xi,k)
={}&
\theta(\xi,k_c)
+\frac{\xi+4\sqrt{2}\,q_o}{\sqrt{3}}(k-k_c)
+\frac{4\sqrt{6}}{9q_o}(k-k_c)^3        \\
&\quad
-\frac{\sqrt{6}}{9q_o}(\xi+4\sqrt{2}\,q_o)(k-k_c)^2
-\frac{2\sqrt{3}}{27q_o^2}(\xi+4\sqrt{2}\,q_o)(k-k_c)^3
+\mathcal O\bigl((k-k_c)^4\bigr),
\end{aligned}
\]
where 
\be \label{E:thetaxikc}
\theta(\xi,k_c)
=
3\sqrt{3}\,q_o^2
-q_o\sqrt{\frac{3}{2}}\,(\xi-\xi_c).
\ee
Equivalently,
\be
\theta(\xi,k)
=
\theta(\xi,k_c)
+\frac{\xi-\xi_c}{\sqrt{3}}(k-k_c)
+\frac{4\sqrt{6}}{9q_o}(k-k_c)^3
+S(\xi,k),
\ee
where
\be \label{E:Sdef}
S(\xi,k)
=
-\frac{\sqrt{6}}{9q_o}(\xi-\xi_c)(k-k_c)^2
-\frac{2\sqrt{3}}{27q_o^2}(\xi-\xi_c)(k-k_c)^3
+\mathcal O\bigl((k-k_c)^4\bigr).
\ee
We introduce the scaled variable \(z\) and $y$ by 
\be \label{E:yz}
\begin{aligned}
z&=
\left(\frac{8\sqrt{6}}{9q_o}\right)^{1/3}
t^{1/3}(k-k_c),\\
y
&=
\frac{2}{\sqrt{3}}
\left(\frac{8\sqrt{6}}{9q_o}\right)^{-1/3}
t^{2/3}(\xi-\xi_c).
\end{aligned}
\ee
Let $\cD_{\eps}$ denote the open disk of radius $\eps$ centered at the
point $k_c$.  Then $k \to z$ is a biholomorphism from  $\cD_{\eps}$ onto the open disk of radius  $ \left(\frac{8\sqrt{6}}{9q_o}\right)^{1/3}
t^{1/3} \eps $  centered at the origin. 
With these definitions, 
we obtain
\[
\begin{aligned}
2t\bigl[\theta(\xi,k)-\theta(\xi,k_c)\bigr]
&=
yz+z^3+2tS(\xi,k), \qquad k \in \cD_{\eps}.
\end{aligned}
\]
Next we derive a  local logarithmic representation of the scalar function $\delta(z)$.
Near \(k=k_c\), the local logarithmic representation is
\[
\delta(k)
=
(k-k_c)^{\mathrm{i}\nu}
e^{\chi(k)}, \qquad k \in \cD_{\eps},
\]
where
\be \label{E:nu}
\nu
=
-\frac{1}{2\pi}
\log \left(1+|r(k_c)|^2\right),
\ee
and
\[
\chi(k)
=
-\frac{1}{2\pi\mathrm{i}}
\int_{-\infty}^{k_c}
\log(k-s)\,
\mathrm{d}\log\left(1+|r(s)|^2\right).
\]
Using
\[
k-k_c
=
\left(\frac{8\sqrt{6}}{9q_o}\right)^{-1/3}
t^{-1/3}z,
\]
we have
\[
\begin{aligned}
\delta(k)=
\exp\left\{
-\frac{\mathrm{i}\nu}{3}
\log\left(\frac{8\sqrt{6}}{9q_o}\right)
-\frac{\mathrm{i}\nu}{3}\log t
\right\}
e^{\chi(k)}
e^{\mathrm{i}\nu \log z}.
\end{aligned}
\]

We then decompose the factor \(\delta^2 e^{-2\mathrm{i}g}\) near the critical
point \(k=k_c\). Using the local logarithmic representation of \(\delta\), for $k \in \cD_{\eps}$ we
obtain
\[
\begin{aligned}
\delta^2 e^{-2\mathrm{i}g}
&=
e^{2\mathrm{i}\nu\log z}
\left(
\frac{8\sqrt{6}}{9q_o}t
\right)^{-\frac{2\mathrm{i}\nu}{3}}
e^{2\chi(k_c)}
e^{-2\mathrm{i}g(k_c)}
\left[
e^{2(\chi(k)-\chi(k_c))}
e^{-2\mathrm{i}g(k)+2\mathrm{i}g(k_c)}
\right]  =
z^{2\mathrm{i}\nu}d_0d_1,
\end{aligned}
\]
where
\[
d_0
=
\left(
\frac{8\sqrt{6}}{9q_o}t
\right)^{-\frac{2\mathrm{i}\nu}{3}}
\exp\left\{
2\chi(k_c)-2\mathrm{i}g(k_c)
\right\},
\]
and
\[
d_1
=
\exp\bigl\{
2\bigl(\chi(k)-\chi(k_c)\bigr)
-2\mathrm{i}\bigl(g(k)-g(k_c)\bigr)
\bigr\}.
\]

We now introduce the local conjugating matrix \(Y\). Let
\be \label{E:r0}
r_0=|r(k_c)|,
\qquad
r(k_c)=r_0e^{\mathrm{i}\arg r(k_c)}.
\ee
Define
\be \label{E:Ydef}
Y=\operatorname{diag}\bigl(Y_1,Y_1^{-1}\bigr),
\qquad
Y_1
=
e^{\ii t\theta(\xi,k_c)}
d_0^{1/2}
e^{-\frac{\ii}{2}\arg r(k_c)} .
\ee

\begin{figure}
\centering
\begin{tikzpicture}[x=1.8cm,y=1.8cm,scale=0.6,transform shape]
  \coordinate (Kone) at (-0.625,0);
  \coordinate (Kc)   at ( 0.70,0);
  \coordinate (C)    at ( 0.05,0);
  \def\rad{2.25}

  \begin{scope}
    \clip (C) circle (\rad);

    \draw[flow=.78]
      (-6.10, 1.43)
        .. controls (-4.65, 1.48) and (-2.70, 1.08) .. (Kone);
    \draw[flow=.78]
      (-6.10,-1.43)
        .. controls (-4.65,-1.48) and (-2.70,-1.08) .. (Kone);

    \draw[axisflow=.55] (-6.10,0) -- (Kone);
    \draw[flow=.55] (Kone) -- (Kc);

    \draw[axisflow=.86] (Kc) -- (7.05,0);

    \draw[flow=.5]
      (Kc)-- (0.7, 2.14);
    \draw[flow=.5]
      (Kc)-- (0.7, -2.14);

    \draw[twoflow=.30 and .72] (4.40,-1.55) -- (4.40,1.55);
  \end{scope}

  \draw[boundarycircle] (C) circle (\rad);

  \node[contourlabel] at (-1.05, 0.78) {$\Sigma^{\eps}_1$};
  \node[contourlabel] at (-1.05,-0.78) {$\Sigma^{\eps}_2$};
  \node[contourlabel] at ( 1.10, 1.18) {$\Sigma^{\eps}_3$};
  \node[contourlabel] at ( 1.10,-1.18) {$\Sigma^{\eps}_4$};
  \node[contourlabel] at ( 0.06, 0.20) {$\Sigma^{\eps}_5$};

  \node[label,below=2pt] at (Kone) {$k_1$};
  \node[label,below=2pt,xshift=8pt] at (Kc) {$k_c$};
\end{tikzpicture}
\caption{The contour \(\Sigma^{\eps}=\bigcup_{j=1}^5 \Sigma_j^{\eps}\).}
\label{F:Sigma2-local}
\end{figure}

\noindent Set
\[
\widetilde M=M^{(4)}Y, \qquad k \in \cD_{\eps}.
\]
Then the jump matrix of \(\widetilde M\) is
\[
\widetilde V
=
Y^{-1}V^{(4)}Y.
\]
The purpose of this conjugation is to remove the constant phase and amplitude
factors in the local jumps.  Define  $\Sigma^{\eps}=\cup_{j=1}^5 \Sigma^{\eps}_j$, where
$\Sigma^{\eps}_j=\Sigma^{(4)}_j \cap \cD_{\eps}$; see Figure~\ref{F:Sigma2-local}.
Let $\widetilde{V}_j$ denote the restriction of $\widetilde{V}$ to $\Sigma^{\eps}_j$. For a fixed $z$,  as $t \to \infty$~(this leads to $k \to k_c$), we know that $\widetilde{V}_j(x,t,k)$ tends to the jump matrix $V^{X}_j(x,t,z)$ defined in~\eqref{E:Vjexp-minus} as $t\to\infty$.

Therefore, the local parametrix in the \(k\)-plane is defined by
\be \label{E:Mloc}
M^{\mathrm{loc}}(k)
=
Y M^X(z(k);y)Y^{-1},
\qquad
k\in \cD_\eps,
\ee
where $M^X(z;y)$ is the solution of the model RH problem~\ref{RHmP-minus}. We now show that \(M^{(4)}\) can be approximated by
\(M^{\rm loc}\) in \(\cD_\eps\).

\begin{lemma}\label{L:Nloc}
For each \((x,t)\), the function \(M^{\rm loc}(x,t,k)\) defined in~\eqref{E:Mloc}
is analytic and bounded for \(k \in\cD_\eps\setminus \Sigma^\eps\).  Across
\(\Sigma^\eps\), it satisfies
\[
M^{\rm loc}_+(x,t,k)=M^{\rm loc}_-(x,t,k) V^{\rm loc}(x,t,k).
\]
Moreover, for sufficiently large \(t\),
\begin{align}\label{E:estVloc}
\begin{cases}
\|V^{(4)}-V^{\rm loc}\|_{L^\infty(\Sigma^\eps)}
\le C t^{-1/3}\log t,\\[1mm]
\|V^{(4)}-V^{\rm loc}\|_{L^1(\Sigma^\eps)}
\le C t^{-2/3}\log t,
\end{cases}
\qquad (x,t)\in\cP_-.
\end{align}
Furthermore, as \(t\to+\infty\),
\begin{align}
\|(M^{\rm loc})^{-1}-I\|_{L^\infty(\partial\cD_\eps)}
&=\mathcal O(t^{-1/3}), \label{E:bzsy}\\
(M^{\rm loc})^{-1}(x,t,k)-I
&=
-\frac{
YM^X_1  Y^{-1}
}{
\left(\frac{8 \sqrt{6}}{9q_o}\right)^{1/3}t^{1/3}(k-k_c)
}
+\mathcal O(t^{-2/3}),
\quad k\in\partial\cD_\eps, \label{E:bzsy1}
\end{align}
where \( M^X_1\) is given by~\eqref{E:defMX1dd}.
\end{lemma}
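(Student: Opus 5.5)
The plan is to read off analyticity and the jumps of \(M^{\rm loc}\) directly from the model problem, then estimate \(V^{(4)}-V^{\rm loc}\) entrywise on \(\Sigma^\eps\), and finally expand \(M^X\) at large \(z\) on \(\partial\cD_\eps\).

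\textbf{Analyticity and jumps.} The map \(k\mapsto z(k)\) in~\eqref{E:yz} is affine, and \(Y\) in~\eqref{E:Ydef} is a constant diagonal matrix in \(k\). Hence analyticity and boundedness of \(M^{\rm loc}\) on \(\cD_\eps\setminus\Sigma^\eps\) follow from the corresponding properties of \(M^X(\cdot;y)\), the solution of RH problem~\ref{RHmP-minus}. Its jump is \(V^{\rm loc}=YV^X(z(k))Y^{-1}\). Since \((x,t)\in\cP_-\) gives \(y\le0\) with \(|y|\le C'\), I would use the solvability of the model problem and the uniform boundedness of \(M^X\) for \(y\) in compact sets, as collected in Appendix~\ref{App:PII}. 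This requires the tritronqu\'ee solution \(\Q\) to be pole-free on the real axis, which I would take from Miller's results.

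\textbf{Jump estimate.} Conjugating by \(Y\), it suffices to bound \(\widetilde V_j-V^X_j\) on each \(\Sigma^\eps_j\). Consider a typical off-diagonal entry, for instance on \(\Sigma^\eps_3\):
\[
\widetilde V_{3,21}=\frac{r(k)}{r_0e^{\ii\arg r(k_c)}}\,r_0\,d_1^{-1}\,z^{-2\ii\nu}\,e^{-\ii(yz+z^3)}\,e^{-2\ii tS(\xi,k)},
\]
while \(V^X\) carries the same expression with \(r(k)/r(k_c)\), \(d_1\) and \(e^{-2\ii tS}\) replaced by \(1\). I would write the difference as a telescoping sum of three terms.

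1. The term from \(r(k)-r(k_c)\) and \(d_1-1\) is \(\mathcal O(|k-k_c|)\), hence \(\mathcal O(t^{-1/3}|z|)\). There is a logarithmic loss, because \(\chi(k)-\chi(k_c)\) behaves like \((k-k_c)\log(k-k_c)\) near \(k_c\) (the measure \(d\log(1+|r|^2)\) ends at \(k_c\)). This produces the \(\log t\).
2. The term from \(e^{-2\ii tS}-1\) is bounded by \(t|S|\lesssim t|\xi-\xi_c||k-k_c|^2+t|k-k_c|^4\lesssim t^{-1/3}|z|^2+t^{-1/3}|z|^4\), using \(|\xi-\xi_c|\le Ct^{-2/3}\).
3. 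Every term carries the factor \(e^{-c|z|^3}\), valid on the rays because \(\operatorname{Re}(\ii\theta)\) has the correct sign there, uniformly in \(\cP_-\). Since \(|S|\) is small relative to \(|z|^3/t\) in \(\cD_\eps\), the damping survives with a smaller constant \(c\).

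Taking the supremum of \(t^{-1/3}(1+|z|^4)\log t\,e^{-c|z|^3}\) gives the \(L^\infty\) bound. For the \(L^1\) bound, integrating in \(k\) with \(dk=\mathcal O(t^{-1/3})\,dz\) gives the extra factor \(t^{-1/3}\), hence \(t^{-2/3}\log t\).

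The segment \(\Sigma^\eps_5\subset(k_1,k_c)\) needs care. It has length \(\mathcal O(t^{-1/3})\), and there the \(\delta_\pm\) boundary values must be matched against the model's \(z_\pm^{2\ii\nu}\). Because \(y\) stays bounded, \(k_1-k_c=\mathcal O(t^{-1/3})\), so the same argument applies without exponential decay.

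\textbf{Expansion on the boundary.} On \(\partial\cD_\eps\) we have \(|z|\asymp t^{1/3}\eps\to\infty\). I would insert the large-\(z\) expansion \(M^X=I+M^X_1/z+\mathcal O(z^{-2})\), uniform for bounded \(y\). Since \(Y\) contains \(z\)-independent phases, the term \(d_0^{1/2}\) is unimodular: \(|d_0|=1\) because \(\nu\) is real, \(\chi(k_c)\) is purely imaginary modulo appropriate terms, and \(g(k_c)\) is real. Hence \(Y\) is bounded, and \((M^{\rm loc})^{-1}=Y(I-M^X_1/z+\mathcal O(z^{-2}))Y^{-1}\), which gives~\eqref{E:bzsy} and~\eqref{E:bzsy1}.

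\textbf{Main obstacle.} I expect the main obstacle to be establishing that the model problem is uniform for bounded \(y\) and that \(Y\) is uniformly bounded, that is, checking \(|Y_1|=1\) from the symmetries of \(\chi\) and \(g\). The logarithmic singularity of \(\chi\) at \(k_c\) is the second delicate point.
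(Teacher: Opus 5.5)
Your proposal is correct and follows essentially the paper's argument. You conjugate by the bounded diagonal matrix \(Y\), split \(\widetilde V_j-V_j^X\) into the three pieces coming from \(r(k)\) versus \(r(k_c)\), the phase remainder \(S(\xi,k)\), and \(d_1-1\) (the source of the \(\log t\)), bound them with the cubic damping \(e^{-ct|k-k_c|^3}\) and \(|\xi-\xi_c|\le Ct^{-2/3}\), and read off~\eqref{E:bzsy}--\eqref{E:bzsy1} from the large-\(z\) expansion of \(M^X\). Your extra checks supply details the paper only asserts: \(|Y_1|=1\) because \(\theta(\xi,k_c)\) and \(g(k_c)\) are real and \(\chi(k_c)\in\ii\R\), and the short segment \(\Sigma_5^\eps\) is treated separately.
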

\begin{proof}
By the definition of \(M^{\mathrm{loc}}\), its jump matrix
\(V^{\mathrm{loc}}\) satisfies
\[
V^{(4)}(k)-V^{\mathrm{loc}}(k)
=
Y
\bigl(\widetilde V(k)-V^X(z(k))\bigr)
Y^{-1}.
\]
Since \(Y\) and \(Y^{-1}\) are uniformly bounded in
\(\cD_{\eps}\), it is enough to estimate
\(\widetilde V(k)-V^X(z(k))\).

We illustrate the estimate on \(\Sigma_1^{\eps}\). On this contour,
\[
\widetilde V_1(k)-V_1^X(z(k))
=
\begin{pmatrix}
0 & g_1(k)\\
0 & 0
\end{pmatrix},
\]
where
\be \label{E:g1}
\begin{aligned}
g_1(k)
={}&
\frac{\overline{r}(k)}{1+r(k)\overline{r}(k)}
e^{\mathrm{i}\arg r(k_c)}
e^{2\mathrm{i}t(\theta(\xi,k)-\theta(\xi,k_c))}
z^{2\mathrm{i}\nu}d_1(k)
-
\frac{r_0}{1+r_0^2}
e^{\mathrm{i}(yz+z^3)}
z^{2\mathrm{i}\nu}.
\end{aligned}
\ee
Therefore it remains to estimate \(g_1(k)\). We write
\[
\begin{aligned}
|g_1(k)|
\le{}&
\left|
\left(
\frac{\overline{r}(k)}{1+r(k)\overline{r}(k)}
e^{\mathrm{i}\arg r(k_c)}
-
\frac{r_0}{1+r_0^2}
\right)
e^{2\mathrm{i}t(\theta(\xi,k)-\theta(\xi,k_c))}
z^{2\mathrm{i}\nu}d_1(k)
\right|
\\
&+
\left|
\frac{r_0}{1+r_0^2}
z^{2\mathrm{i}\nu}d_1(k)
\left(
e^{2\mathrm{i}t(\theta(\xi,k)-\theta(\xi,k_c))}
-
e^{\mathrm{i}(yz+z^3)}
\right)
\right|
\\
&+
\left|
\frac{r_0}{1+r_0^2}
z^{2\mathrm{i}\nu}
e^{\mathrm{i}(yz+z^3)}
\bigl(d_1(k)-1\bigr)
\right|.
\end{aligned}
\]
For \( k \in \Sigma_1^\eps\),  it is straightforward to verify that
\[ \left| e^{2\ii t\theta(\xi,k)} \right| \le C e^{-ct|k-k_c|^3}, \qquad \left| e^{\ii(yz+z^3)} \right| \le C e^{-ct|k-k_c|^3}. \] 
Since \(\mathrm{Re}(\ii t\theta(\xi,k_c))=0\), and using the elementary inequality \[ |e^{\omega_1}-e^{\omega_2}| \le \left(|e^{\omega_1}|+|e^{\omega_2}|\right) |\omega_1-\omega_2|, \qquad \omega_1,\omega_2\in\C, \] we obtain
\[ \left| e^{2\ii t(\theta(\xi,k)-\theta(\xi,k_c))} - e^{\ii(yz+z^3)} \right| \le Ct|S(\xi,k)|e^{-ct|k-k_c|^3}. \]
Therefore,
\[
|g_1(k)|
\le
C|k-k_c|e^{-ct|k-k_c|^3}
+
Ct|S(\xi,k)|e^{-ct|k-k_c|^3}
+
C|d_1(k)-1|e^{-ct|k-k_c|^3}.
\]
Moreover, by the definition of \(S(\xi,k)\)~\eqref{E:Sdef},
\[
|S(\xi,k)|
\le
C|\xi-\xi_c||k-k_c|^2
+
C|\xi-\xi_c||k-k_c|^3
+
C|k-k_c|^4.
\]
On the other hand, the local logarithmic representation gives
\[
|d_1(k)-1|
\le
C\left(1+\left|\log|k-k_c|\right|\right)|k-k_c|.
\]
Hence, putting \(u=|k-k_c|\), and using the  scaling
\(
|\xi-\xi_c|\le Ct^{-2/3}
\),
we get
\[
\begin{aligned}
|g_1(k)|
\le{}&
C\sup_{u>0} u e^{-ctu^3}
+
Ct|\xi-\xi_c|\sup_{u>0} u^2 e^{-ctu^3} + Ct|\xi-\xi_c|\sup_{u>0} u^3 e^{-ctu^3}\\
&+
Ct\sup_{u>0} u^4 e^{-ctu^3}
+
C\sup_{u>0}
\left(1+|\log u|\right)u e^{-ctu^3}.
\end{aligned}
\]
Consequently,
\[
|g_1(k)|
\le
C t^{-1/3}\log t .
\]
Similarly, by direct integration along \(\Sigma_1^{\eps}\), we obtain
\[
\int_{\Sigma_1^{\eps}}|g_1(s)|\,|ds|
\le
C t^{-2/3}\log t .
\]
The estimates on the other components of the local contour are obtained in the
same way.
\end{proof}

\vskip 2mm
\noindent\textbf{Outer parametrix.} 
In the present setting, the outer parametrix is the solution of the following
RH problem:
\[
\begin{cases}
M^{\mathrm{out}}_+(k)
=
M^{\mathrm{out}}_-(k)V_B(k),
&
k\in B,
\\[0.8ex]
M^{\mathrm{out}}(k)
=
\left(I+O(k^{-1})\right)e^{\mathrm{i}g_\infty\sigma_3},
&
k\to\infty.
\end{cases}
\]
This model problem has been explicitly solved in Refs.
\cite{BM2017,BLM2021}. More precisely, its solution is given by
\be \label{E:Mout}
M^{\mathrm{out}}(k)
=
\frac12 e^{\mathrm{i}g_\infty\sigma_3}
\begin{pmatrix}
\Lambda(k)+\Lambda^{-1}(k)
&
-\dfrac{q_-}{q_o}\left(\Lambda(k)-\Lambda^{-1}(k)\right)
\\[1.2ex]
-\dfrac{q_o}{q_-}\left(\Lambda(k)-\Lambda^{-1}(k)\right)
&
\Lambda(k)+\Lambda^{-1}(k)
\end{pmatrix},
\ee
where
\[
\Lambda(k)
=
\left(
\frac{k-\mathrm{i}q_o}{k+\mathrm{i}q_o}
\right)^{1/4}.
\]
Here the fourth root is chosen so that
\[
\Lambda(k)\to 1,
\qquad k\to\infty.
\]
Notice that
\[
\lim_{k\to\infty}
k\left(\Lambda(k)-\Lambda^{-1}(k)\right)
=
-\mathrm{i}q_o.
\]
Therefore,
\be \label{E:gjjx}
\begin{aligned}
-2\mathrm{i}
\lim_{k\to\infty}
\left[
k\left(M^{\mathrm{out}}(x,t,k)\right)_{12}
\right]
e^{\mathrm{i}g_\infty}
&=
-2\mathrm{i}
\lim_{k\to\infty}
\left[
-\frac{q_-}{2q_o}
k\left(\Lambda(k)-\Lambda^{-1}(k)\right)
e^{\mathrm{i}g_\infty}
\right]
e^{\mathrm{i}g_\infty}
\\
&=
q_-e^{2\mathrm{i}g_\infty}.
\end{aligned}
\ee
This limiting relation will be used later in the reconstruction formula.

\subsection{The small norm RH problem} \label{su:smn}

We define the final transformation to obtain a small-norm RH problem as follows:
\[
E(x,t,k)
=
\begin{cases}
M^{(4)}(x,t,k)\bigl(M^{\mathrm{out}}(x,t,k)\bigr)^{-1},
&
k\in \mathbb C\setminus \overline{\cD_\eps},
\\[1.2ex]
M^{(4)}(x,t,k)
\bigl(M^{\mathrm{loc}}(x,t,k)\bigr)^{-1}
\bigl(M^{\mathrm{out}}(x,t,k)\bigr)^{-1},
&
k\in \cD_\eps .
\end{cases}
\]
The jump contour of \(E\) is
\[
\Sigma^E=\Sigma^{(4)}\cup \partial \cD_\eps,
\]
where \(\partial \cD_\eps\) is oriented counterclockwise.
The corresponding jump matrix \(V^E\) is given by
\be 
V^E(k)
=
\begin{cases}
M^{\mathrm{out}}(k)
\bigl(M^{\mathrm{loc}}(k)\bigr)^{-1}
\bigl(M^{\mathrm{out}}(k)\bigr)^{-1},
&
k\in \partial \cD_\eps,
\\[1.4ex]
M^{\mathrm{out}}(k)
V^{(4)}(k)
\bigl(M^{\mathrm{out}}(k)\bigr)^{-1},
&
k\in \Sigma^{(4)}
\setminus\bigl(\overline{ \cD_\eps}\cup B\bigr),
\\[1.4ex]
I,
&
k\in B,
\\[1.4ex]
M^{\mathrm{out}}(k)
\left[
M_-^{\mathrm{loc}}(k)
V^{(4)}(k)
\bigl(M_+^{\mathrm{loc}}(k)\bigr)^{-1}
\right]
\bigl(M^{\mathrm{out}}(k)\bigr)^{-1},
&
k\in \Sigma^{\eps}.
\end{cases}  \label{E:VEdef}
\ee
Let
\[
w^E(k):=V^E(k)-I.
\]
The following estimates show that the error problem is a small-norm
RH problem.

\begin{lemma} \label{E:mxy}
As \(t\to\infty\), the following estimates hold uniformly for
\((x,t)\in\mathcal P_-\):
\begin{subequations}\label{E:wE-estimates}
\begin{align}
&\left\|w^E\right\|_{L^1\cap L^\infty
\left(\Sigma^E\setminus \overline{\cD_\eps}\right)}
\le
C e^{-ct},
\label{E:wE-estimates-a}
\\
&\left\|w^E\right\|_{L^1\cap L^\infty(\partial \cD_\eps)}
\le
C t^{-1/3},
\label{E:wE-estimates-b}
\\
&\left\|w^E\right\|_{L^1\left(\Sigma^{\eps}\right)}
\le
C t^{-2/3}\log t,
\label{E:wE-estimates-c}
\\
&\left\|w^E\right\|_{L^\infty\left(\Sigma^{\eps}\right)}
\le
C t^{-1/3}\log t.
\label{E:wE-estimates-d}
\end{align}
\end{subequations}

\end{lemma}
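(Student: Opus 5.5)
The four estimates will be proved separately, one for each piece of the contour $\Sigma^E$ in the definition \eqref{E:VEdef} of $V^E$. Two facts are used throughout. First, $M^{\rm out}$ and $(M^{\rm out})^{-1}$ are analytic and uniformly bounded on $\Sigma^E\setminus B$ away from the endpoints $\pm\ii q_o$. This follows from the explicit formula \eqref{E:Mout}, since $\det M^{\rm out}=1$ and $\Lambda^{\pm1}$ is bounded away from $\pm\ii q_o$. Since $k_c=-q_o/\sqrt2$ lies at positive distance from $B$, these bounds hold uniformly on $\overline{\cD_\eps}$ for $\eps$ small. Second, all bounds must be uniform in $(x,t)\in\cP_-$, i.e.\ in $y\in[-C',0]$. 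This is where we use that the model problem~\ref{RHmP-minus} is uniformly solvable for $y$ in compact sets.

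\textbf{Step 1: away from the disk, estimate \eqref{E:wE-estimates-a}.} On $B$ we have $w^E=0$. On $\Sigma^{(4)}_j\setminus\overline{\cD_\eps}$, $j=1,\dots,5$, we have $w^E=M^{\rm out}(V^{(4)}-I)(M^{\rm out})^{-1}$, and $V^{(4)}-I$ consists only of the off-diagonal entries carrying $e^{\pm2\ii(\theta t-g)}$. On $\Sigma_5$, the diagonal factor $1+r\bar r$ sits next to $k_c$ and lies inside $\cD_\eps$ after the contour choice, so this piece is handled in Step 3.

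Outside $\cD_\eps$ the lenses were chosen, according to the sign chart in Figure~\ref{sign-structure-f}, so that $\mathrm{Re}(\pm\ii\theta)\le -c<0$ on them. This is stable for $|\xi-\xi_c|\le Ct^{-2/3}$ because the sign structure changes continuously in $\xi$, and the only degeneration occurs at $k_c$, which is excluded. Moreover $g$, $\delta^{\pm2}$ and $r$ are bounded there. This gives $|w^E|\le Ce^{-ct}$ pointwise.

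For the $L^1$ bound on the unbounded rays $\Sigma_1,\Sigma_2$ we need some decay. Since $\mathrm{Re}(\ii\theta)$ grows like $|k|^2$ along these rays, the factor $e^{-ct|k|^2}$ supplies the integrability.

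\textbf{Step 2: on the circle, estimate \eqref{E:wE-estimates-b}.} Here $w^E=M^{\rm out}\bigl((M^{\rm loc})^{-1}-I\bigr)(M^{\rm out})^{-1}$. By \eqref{E:bzsy} in Lemma~\ref{L:Nloc} this is $\mathcal O(t^{-1/3})$ in $L^\infty$. Since $\partial\cD_\eps$ has finite length, the same bound holds in $L^1$.

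\textbf{Step 3: inside the disk, estimates \eqref{E:wE-estimates-c}--\eqref{E:wE-estimates-d}.} On $\Sigma^\eps$, write
\[
M^{\rm loc}_-V^{(4)}(M^{\rm loc}_+)^{-1}-I=M^{\rm loc}_-\bigl(V^{(4)}-V^{\rm loc}\bigr)(M^{\rm loc}_+)^{-1},
\]
which holds because $V^{\rm loc}=(M^{\rm loc}_-)^{-1}M^{\rm loc}_+$. Conjugating by the bounded $M^{\rm out}$ does not change the orders. Hence, once $M^{\rm loc}_\pm$ are known to be uniformly bounded on $\Sigma^\eps$, estimates \eqref{E:wE-estimates-c}--\eqref{E:wE-estimates-d} follow directly from \eqref{E:estVloc} in Lemma~\ref{L:Nloc}.

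\textbf{Main obstacle: uniform boundedness of $M^{\rm loc}_\pm$.} By \eqref{E:Mloc}, $M^{\rm loc}=YM^X(z;y)Y^{-1}$, and $Y$ is bounded because $|Y_1|$ is bounded: $\theta(\xi,k_c)$ is real, and $|d_0|$ is bounded since $\nu$ is real. So boundedness reduces to showing that $M^X_\pm(z;y)$ is bounded uniformly for $z\in\C$ and $y\in[-C',0]$.

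This I would obtain from the appendix theory of the Painlev\'e-II model problem, together with the tritronqu\'ee asymptotics \eqref{E:CQ}. Solvability of RH problem~\ref{RHmP-minus} for every real $y$ rules out poles of the relevant solution on the real line. Continuity in $y$ then gives uniform bounds on compact $y$-sets. Finally, the large-$z$ expansion $M^X=I+M^X_1/z+\cdots$ holds uniformly on such sets, which controls $M^X$ for large $z$.

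One more point needs attention. $M^X_\pm$ behaves like $|z|^{\pm\mathrm{Im}(\ii\nu)}$ near $z=0$ (with $\nu$ real this is bounded up to $\log$ factors). Any such mild singularity at $k_c$ is absorbed by the estimate in Lemma~\ref{L:Nloc}, whose integrands carry the factor $e^{-ct|k-k_c|^3}$.
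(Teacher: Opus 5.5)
Your proposal is correct and follows essentially the same route as the paper. You use exponential decay of $V^{(4)}-I$ off $\overline{\cD_\eps}$ (with $w^E=0$ on $B$), the matching bound from Lemma~\ref{L:Nloc} on $\partial\cD_\eps$, and the identity $w^E=M^{\rm out}M^{\rm loc}_-\bigl(V^{(4)}-V^{\rm loc}\bigr)(M^{\rm loc}_+)^{-1}(M^{\rm out})^{-1}$ on $\Sigma^\eps$ combined with \eqref{E:estVloc}; you also justify the uniform boundedness of $M^{\rm loc}$, which the paper simply asserts.

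Two small slips, neither of which affects the conclusion. First, for real $\nu$ one has $|z^{\pm\ii\nu}|=e^{\mp\nu\arg z}$, so there is no $|z|^{\pm\nu}$-type behavior at $z=0$; $M^X$ is bounded there because $W$ has continuous boundary values at the origin, so no absorption argument is needed. Second, the far lens contours stay in the neighborhood where $r$ is analytic, so the decay along them is of type $e^{-ct|k|}$ rather than $e^{-ct|k|^2}$, which still gives the $L^1$ bound.
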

\begin{proof} For \(k\in \Sigma^{(4)}\setminus \bigl(\overline{\cD_\eps}\cup B\bigr)\), the sign distribution of \(\Re(\ii\theta)\) implies that the jump matrix \(V^{(4)}(k)\) converges exponentially fast to the identity matrix \(I\). Moreover, on the branch cut \(B\), the error jump is trivial, namely \(V^E=I\). These facts give \eqref{E:wE-estimates-a}. Next, the estimate on \(\partial\cD_\eps\) follows directly from the matching condition~\eqref{E:bzsy1} and the definition of the error jump matrix in~\eqref{E:VEdef}. This proves~\eqref{E:wE-estimates-b}. It remains to estimate \(w^E\) on the local contour \(\Sigma^\eps\). For \(k\in\Sigma^\eps\), we have \[ w^E(k) = M^{\mathrm{out}}(k) \left[ M_-^{\mathrm{loc}}(k) \left( V^{(4)}(k)-V^{\mathrm{loc}}(k) \right) \bigl(M_+^{\mathrm{loc}}(k)\bigr)^{-1} \right] \bigl(M^{\mathrm{out}}(k)\bigr)^{-1}. \] Since \(M^{\mathrm{out}}\), \(M^{\mathrm{loc}}\), and their inverses are uniformly bounded in \(\cD_\eps\), the estimates \eqref{E:wE-estimates-c} and~\eqref{E:wE-estimates-d} follow immediately from~\eqref{E:estVloc}. \end{proof}

Consequently,
\[
\left\|w^E\right\|_{L^1(\Sigma^E)}
\le
C t^{-1/3},
\qquad
\left\|w^E\right\|_{L^\infty(\Sigma^E)}
\le
C t^{-1/3}\log t.
\]
In particular, by interpolation,
\(
\left\|w^E\right\|_{L^2(\Sigma^E)}
\le
C t^{-1/3}(\log t)^{1/2}.
\)
Define the Cauchy operator
\[
\cC_{w^E}(h):=\cC_-\bigl(hw^E\bigr).
\]
Since
$
\|\cC_{w^E}\|_{B(L^2(\Sigma^E))}\to 0$, $t\to\infty
$,
the operator \(I-\cC_{w^E}\) is invertible for  sufficiently large \(t\).
Thus we may define
\[
\mu^E
=
I+(I-\cC_{w^E})^{-1} \cC_{w^E}I
\in I+L^2(\Sigma^E).
\]
We shall estimate \(\mu^E-I\). Indeed,
\be \label{E:mujh}
\begin{aligned}
\|\mu^E-I\|_{L^2(\Sigma^E)}
&\le
\sum_{j=0}^{\infty}
\| \cC_{w^E}\|_{B(L^2(\Sigma^E))}^j
\|\cC_{w^E}I\|_{L^2(\Sigma^E)}
\\
&\le
\frac{
\|\cC_-\|_{B(L^2(\Sigma^E))}
\|w^E\|_{L^2(\Sigma^E)}
}{
1-
\|\cC_-\|_{B(L^2(\Sigma^E))}
\|w^E\|_{L^\infty(\Sigma^E)}
}
 \le
C t^{-1/3}(\log t)^{1/2}.
\end{aligned}
\ee

By the standard theory for small-norm RH problems~\cite{Lenells2018,Lenells2017}, the error
function \(E(x,t,k)\) can be represented as
\[
E(x,t,k)
=
I+\frac{1}{2\pi\mathrm{i}}
\int_{\Sigma^E}
\frac{\mu^E(x,t,s)w^E(x,t,s)}{s-k}\,ds,
\qquad
k\in\mathbb C\setminus\Sigma^E.
\]
Therefore, the coefficient of \(k^{-1}\) in the expansion of \(E\) at infinity
is
\[
E^{(1)}(x,t)
:=
\lim_{k\to\infty}k\bigl(E(x,t,k)-I\bigr)=-\frac{1}{2\pi\mathrm{i}}
\int_{\Sigma^E}
\mu^E(x,t,s)w^E(x,t,s)\,ds.
\]

\begin{lemma}
As \(t\to\infty\),
\be \label{E:E1asi}
E^{(1)}(x,t)
=
-\frac{1}{2\pi\mathrm{i}}
\int_{\partial \cD_\eps}
w^E(x,t,s)\,ds
+
\mathcal O\left(t^{-2/3}\log t\right).
\ee
\end{lemma}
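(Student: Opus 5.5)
We start from the exact representation
\[
E^{(1)}=-\frac{1}{2\pi\mathrm{i}}\int_{\Sigma^E}\mu^E w^E\,ds
\]
and write $\mu^E=I+(\mu^E-I)$. This splits $E^{(1)}$ into a linear term $-\frac{1}{2\pi\mathrm{i}}\int_{\Sigma^E}w^E\,ds$ and a remainder $-\frac{1}{2\pi\mathrm{i}}\int_{\Sigma^E}(\mu^E-I)w^E\,ds$. We decompose the contour as $\Sigma^E=\bigl(\Sigma^E\setminus\overline{\cD_\eps}\bigr)\cup\partial\cD_\eps\cup\Sigma^\eps$ and bound each piece using Lemma~\ref{E:mxy}.

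\textbf{Linear term.} The contribution from $\Sigma^E\setminus\overline{\cD_\eps}$ is bounded by $\|w^E\|_{L^1}\le Ce^{-ct}$ by \eqref{E:wE-estimates-a}. The contribution from $\Sigma^\eps$ is bounded by $\|w^E\|_{L^1(\Sigma^\eps)}\le Ct^{-2/3}\log t$ by \eqref{E:wE-estimates-c}. Only the integral over $\partial\cD_\eps$ survives, and it is exactly the main term in \eqref{E:E1asi}.

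\textbf{Remainder.} By Cauchy--Schwarz,
\[
\Bigl|\int_{\Sigma^E}(\mu^E-I)w^E\,ds\Bigr|\le\|\mu^E-I\|_{L^2(\Sigma^E)}\|w^E\|_{L^2(\Sigma^E)}.
\]
By \eqref{E:mujh} and the interpolation bound, this is at most $Ct^{-1/3}(\log t)^{1/2}\cdot Ct^{-1/3}(\log t)^{1/2}=Ct^{-2/3}\log t$. On $\partial\cD_\eps$ one may instead use $\|w^E\|_{L^2(\partial\cD_\eps)}\le Ct^{-1/3}$ from \eqref{E:wE-estimates-b}, which gives the slightly better bound $Ct^{-2/3}(\log t)^{1/2}$ there. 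Either way the remainder is $\mathcal O(t^{-2/3}\log t)$.

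\textbf{Uniformity.} All constants in Lemma~\ref{E:mxy} and in \eqref{E:mujh} are uniform for $(x,t)\in\cP_-$. The operator norm $\|\cC_-\|_{B(L^2(\Sigma^E))}$ is uniform because $\Sigma^E$ is fixed; it does not depend on $t$, only the disk radius $\eps$ enters. Hence the error bound holds uniformly in $\cP_-$.

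\textbf{Main obstacle.} The delicate point is that the local-contour $L^2$ norm of $w^E$ carries a $(\log t)^{1/2}$ factor, coming from the $L^\infty$ bound $Ct^{-1/3}\log t$ in \eqref{E:wE-estimates-d}. The remainder must therefore be estimated in $L^2\times L^2$, rather than in $L^\infty\times L^1$ on the full contour, where the $L^1$ norm is only $t^{-1/3}$. This is what makes the product come out at $t^{-2/3}\log t$ instead of $t^{-1/2}$-type or worse.
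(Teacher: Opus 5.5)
Your proposal is correct and is essentially the paper's proof. The paper splits $E^{(1)}$ into three pieces:
\begin{itemize}
\item the $\partial\cD_\eps$ integral;
\item the linear term $F_1$ over $\Sigma^E\setminus\partial\cD_\eps$, controlled by \eqref{E:wE-estimates-a} and \eqref{E:wE-estimates-c};
\item the remainder $F_2=-\frac{1}{2\pi\mathrm{i}}\int_{\Sigma^E}(\mu^E-I)w^E\,ds$, controlled by \eqref{E:mujh} and the $L^2$ bound on $w^E$.
\end{itemize}
You follow the same split and simply make the Cauchy--Schwarz step explicit.

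One minor imprecision is in your uniformity remark. In $\cP_-$ the contour is not literally fixed, because the lens point $k_1(\xi)$ moves within $\mathcal O(t^{-1/3})$ of $k_c$. The Cauchy operator norms still remain uniformly bounded over this family of contours, so your conclusion stands. Also, $\|w^E\|_{L^2(\Sigma^\eps)}\le Ct^{-1/2}\log t$, so the $\log t$ in the final error really comes from the linear $\Sigma^\eps$ term rather than from the remainder.
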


\begin{proof}
From the Cauchy representation above, we have
\[
E^{(1)}(x,t)
=
-\frac{1}{2\pi\mathrm{i}}
\int_{\partial \cD_\eps}
w^E(x,t,s)\,ds
+
F_1(x,t)+F_2(x,t),
\]
where
\[
F_1(x,t)
=
-\frac{1}{2\pi\mathrm{i}}
\int_{\Sigma^E\setminus\partial \cD_\eps}
w^E(x,t,s)\,ds,
\]
and
\[
F_2(x,t)
=
-\frac{1}{2\pi\mathrm{i}}
\int_{\Sigma^E}
\bigl(\mu^E(x,t,s)-I\bigr)w^E(x,t,s)\,ds.
\]
Using Lemma~\ref{E:mxy} and the bound for \(\mu^E-I\)~\eqref{E:mujh}, we obtain
\[
F_1(x,t)+F_2(x,t)
=
\mathcal O\left(t^{-2/3}\log t\right).
\]
This proves the lemma.
\end{proof}

We next compute the contribution from \(\partial \cD_\eps \). Recall that
for \(k\in\partial \cD_\eps\),
\[
w^E(x,t,k)
=
M^{\mathrm{out}}(x,t,k)
\left[
\bigl(M^{\mathrm{loc}}(x,t,k)\bigr)^{-1}-I
\right]
\bigl(M^{\mathrm{out}}(x,t,k)\bigr)^{-1}.
\]
Therefore, by~\eqref{E:bzsy1} and  the residue theorem,
\be \label{E:wedgj}
\begin{aligned}
-\frac{1}{2\pi\mathrm{i}}
\int_{\partial \cD_\eps}
w^E(x,t,s)\,ds
&=
\frac{
M^{\mathrm{out}}(x,t,k_c)
Y M_1^X(y) Y^{-1}
\bigl(M^{\mathrm{out}}(x,t,k_c)\bigr)^{-1}
}{
\left(\tfrac{8\sqrt{6}}{9q_o}\right)^{1/3}
t^{1/3}
}
\\
&\quad
+
\mathcal O\left(t^{-2/3}\right),
\qquad
t\to\infty.
\end{aligned}
\ee
For convenience, write
\be \label{E:defbetaj}
M^{\mathrm{out}}(x,t,k_c)
=:
\begin{pmatrix}
\beta_{11} & \beta_{12}
\\
\beta_{21} & \beta_{22}
\end{pmatrix}.
\ee
Using formula~\eqref{E:Ydef},~\eqref{E:E1asi},~\eqref{E:wedgj},~\eqref{E:defMX1dd}, \eqref{E:defbetaj} and $\det M^{\rm out}=1$, a long but direct calculation gives
\be \label{E:E12asyp}
E^{(1)}_{12}(x,t)
=
-
\frac{
\alpha_1\mathcal V(y)
+
\alpha_2\mathcal V^*(y)
+
\alpha_3\widehat{\mathcal V} (y)
}{
\left(\tfrac{8\sqrt{6}}{9q_o}\right)^{1/3}t^{1/3}
}
+
\mathcal O\left(t^{-2/3} \log t\right), \qquad t \to \infty,
\ee
where
\be \label{E:defalpha}
\alpha_1=\beta_{12}^2 Y_1^{-2},
\qquad
\alpha_2=\beta_{11}^2 Y_1^{2},
\qquad
\alpha_3=2\beta_{11}\beta_{12}.
\ee
Moreover, it is easy to see that
\(\alpha_1(x,t)\), \(\alpha_2(x,t)\), and \(\alpha_3(x,t)\) are uniformly
bounded quantities.

\subsection{
Proof of Theorem~\ref{Th:main} for \texorpdfstring{\((x,t)\in\cP_-\)}{(x,t) in P-}
}\label{S:pthmain}

We now use the reconstruction formula
\[
q(x,t)
=
-2\mathrm{i}\lim_{k\to\infty} kM_{12}(x,t,k)
\]
to compute the long-time asymptotics of \(q(x,t)\). Recalling all the invertible transformations introduced above,  we have
\[
E(x,t,k)
=
M(x,t,k)\delta(k)^{-\sigma_3}D(k)
e^{\mathrm{i}g(k)\sigma_3}
\bigl(M^{\mathrm{out}}(x,t,k)\bigr)^{-1}.
\]
Therefore,
\[
M(x,t,k)
=
E(x,t,k)M^{\mathrm{out}}(x,t,k)
e^{-\mathrm{i}g(k)\sigma_3}
D^{-1}(k)\delta(k)^{\sigma_3}.
\]
Let
\[
M_{12}^{(1)}(x,t)
:=
\lim_{k\to\infty}kM_{12}(x,t,k),
\qquad
\bigl(M^{\mathrm{out}}\bigr)_{12}^{(1)}(x,t)
:=
\lim_{k\to\infty}kM_{12}^{\mathrm{out}}(x,t,k).
\]
Then the reconstruction formula gives
\[
q(x,t)
=
-2\mathrm{i}M_{12}^{(1)}(x,t)=-2\ii E^{(1)}_{12}(x,t)-2\ii \bigl(M^{\mathrm{out}}\bigr)_{12}^{(1)}(x,t)e^{\ii g_{\infty}}.
\]
Using~\eqref{E:E12asyp} and~\eqref{E:gjjx}, we arrive at
\[
q(x,t)
=
q_-e^{2\mathrm{i}g_\infty}
+
\frac{
2\mathrm{i}\left[
\alpha_1\mathcal V(y)
+
\alpha_2\mathcal V^*(y)
+
\alpha_3\widehat {\mathcal V}(y)
\right]
}{
\left(\tfrac{8\sqrt{6}}{9q_o}\right)^{1/3}t^{1/3}
}
+
\mathcal O\left(t^{-2/3}\log t\right).
\]
This is precisely the asymptotic formula~\eqref{E:asygs}.


\section{Long--time asymptotics in the transition region \texorpdfstring{$\cP_+$}{P+}}\label{S:dzcpplus}
The boundary transition analysis from the modulated elliptic-wave side differs
substantially from the analysis in the interior of the elliptic-wave region.
In~\cite[Section~5]{BM2017}, a \(G\)-function transformation is introduced to
control the nondecaying jump.  In the present transition regime, however, the
nondecaying contribution is confined to a short portion of the contour of
length \(\mathcal O(t^{-1/3})\), and can therefore be absorbed into the local
model problem near \(k_c\).

We explain this point in more detail.   First, let us continue the contour directly from
\(k_c\) upward and downward, and assume that the initial parts of these
continuations are vertical.  Denote by \(k_3\) and \(k_4\) the corresponding
intersection points with the level curve
\(
        \mathrm{Re}(\ii\theta(\xi,k))=0
\)
on the upper and lower sides, respectively.  Then the jumps on the short
segments connecting \(k_4\) to \(k_c\) and \(k_c\) to \(k_3\) are exponentially
growing as \(t\to\infty\).  However, the length of this growing part is only of
order \(\mathcal O(t^{-1/3})\).  Indeed, by using the Taylor expansion of
\(\theta(\xi,k)\) near \(k_c\), a direct calculation gives
\[
|k_3-k_c|^2
=
\frac{3q_o}{4\sqrt{2}}(\xi-\xi_c)
\left(1+\mathcal O(\sqrt{\xi-\xi_c})\right).
\]
Since \(\xi-\xi_c=\mathcal O(t^{-2/3})\) in the transition scaling, it follows
that
\[
        |k_3-k_c|=\mathcal O(t^{-1/3}),
\]
and the same estimate holds for \(k_4\).  Thus the exponentially growing part
of the jump is entirely contained in the local neighborhood of \(k_c\) and can
be incorporated into the local model problem.  Moreover, we will show below
that the local parametrix gives a sufficiently accurate approximation
throughout this neighborhood.

Consequently, no additional \(G\)-function transformation is required.  The
analysis therefore follows the same general scheme as in the region
\(\cP_-\).  We only describe the necessary modifications and omit the details
that are identical to those in Section~\ref{S:dza}.

\medskip
\noindent\textbf{Opening of lenses.}
The first transformation is slightly different from the one used for
\((x,t)\in\cP_-\).  According to the sign distribution of
\(\mathrm{Re}(\ii\theta)\), see Figure~\ref{sign-structure-f}, we should now open the lens along
\((-\infty,k_c)\).  The second, third, and fourth transformations are the
same as those in the case \((x,t)\in\cP_-\), and hence we do not repeat their
details here.  After this sequence of transformations, we arrive at the
function \(M^{(4)}\).  The corresponding jump contour is denoted by
\(\Sigma^{(4)}\), see Figure~\ref{fig:deformed-contour}, and its jump matrices
are denoted by \(V^{(4)}\), including the jump \(V_B\) on the branch cut.  The
matrices \(\{V_j^{(4)}\}_{j=1}^4\) and \(V_B\) have the same form as in
Section~\ref{S:dza}.  In particular, except on the branch cut and in a small
neighborhood of \(k_c\), all jumps are exponentially close to the identity
matrix as \(t\to\infty\).
\begin{figure}
\centering
\scalebox{0.9}{%
\begin{tikzpicture}[x=1cm,y=1cm]

\tikzset{
  contour/.style={draw=black, line width=1.15pt, line cap=round, line join=round},
  dashedaxis/.style={draw=black, dashed, line width=1.15pt, line cap=round, line join=round},
  flow/.style={
        draw=black,
        line width=1.15pt,
        line cap=round,
        line join=round,
        postaction={decorate},
        decoration={
            markings,
            mark=at position #1 with {\arrow{Stealth}}
        }
    },
twoflow/.style args={#1 and #2}{
        draw=black,
        line width=1.15pt,
        line cap=round,
        line join=round,
        postaction={decorate},
        decoration={
            markings,
            mark=at position #1 with {\arrow{Stealth}},
            mark=at position #2 with {\arrow{Stealth}}
        }
    },
  axisflow/.style={
        draw=black,
        dashed,
        line width=1.15pt,
        line cap=round,
        line join=round,
        postaction={decorate},
        decoration={
            markings,
            mark=at position #1 with {\arrow{Stealth}}
        }
    },
  every node/.style={font=\footnotesize, inner sep=1pt}
}

\coordinate (Btop)   at (0,1.72);
\coordinate (Bbot)   at (0,-1.72);
\coordinate (Kzero)  at (-2.10,0);
\coordinate (Gup)    at (-2.10,0.68);
\coordinate (Gdown)  at (-2.10,-0.68);

\draw[dashedaxis,axisflow=0.60] (-6.65,0) -- (Kzero);
\draw[dashedaxis,axisflow=0.75] (Kzero) -- (4.95,0);

\draw[twoflow=.30 and .72] (0,-1.55) -- (0,1.55);

\draw[contour,flow=0.42]
  (-6.55, 0.86) -- (-5.95,0.86)
  .. controls (-5.00,0.86) and (-2.90,0.34) .. (Kzero);

\draw[contour,flow=0.42]
  (-6.55,-0.86) -- (-5.95,-0.86)
  .. controls (-5.00,-0.86) and (-2.90,-0.34) .. (Kzero);

  \draw[draw=black,line width=1.15pt,line cap=round] (Kzero)--(-2.1,0.50);
  \draw[draw=black,line width=1.15pt,line cap=round] (Kzero)--(-2.1,-0.50);
  \draw[twoflow=.35 and .75]
    (-2.1,0.50)
      .. controls (-2.1, 1.5) and (-1.3, 1.95) .. (-0.3, 1.95)
      .. controls (0.7, 1.95) .. (2.2, 1.95)
      .. controls (3.2, 1.95) and (3.2, 0.6) .. (4.65, 0.6);
  \draw[twoflow=.35 and .75]
    (-2.1,-0.50)
      .. controls (-2.1, -1.5) and (-1.3, -1.95) .. (-0.3, -1.95)
      .. controls (0.7, -1.95) .. (2.2, -1.95)
      .. controls (3.2, -1.95) and (3.2, -0.6) .. (4.65, -0.6);

\node at (-6.00, 1.35) {$1$};
\node at (-6.00,-1.35) {$2$};
\node at (0.40,-1.00) {$B$};

\node at (-1.20, 2.40) {$3$};
\node at (-1.20,-2.40) {$4$};
\node at (-1.90,-0.30) {$k_c$};

\end{tikzpicture}%
}
\caption{The jump contour for \(M^{(4)}\).}
\label{fig:deformed-contour}
\end{figure}

\medskip
\noindent\textbf{Local parametrix.}
The construction of the local parametrix is also analogous to that in the case
\((x,t)\in\cP_-\).  Let \(\cD_{\eps}\) be the open disk of radius \(\eps\)
centered at \(k_c\).  Define
\[
   \Sigma^{\eps}=\bigcup_{j=1}^4 \Sigma_j^{\eps},
   \qquad
   \Sigma_j^{\eps}=\Sigma_j^{(4)}\cap\cD_{\eps},
\]
see Figure~\ref{fig:local-kzero}.

\begin{figure}
\centering
\begin{tikzpicture}[x=2.4cm,y=2.4cm]

\tikzset{
  contour/.style={draw=black,line width=1.15pt,line cap=round,line join=round},
  dashedcircle/.style={draw=black,dashed,line width=1.15pt,line cap=round,line join=round},
  dirarrow/.style={draw=black,line width=1.15pt,-{Stealth[length=4.5pt,width=4.5pt]},line cap=round,line join=round}
}

\coordinate (Kzero) at (0,0);
\def\r{0.90}

\coordinate (Lup) at (-0.7794, 0.45);
\coordinate (Ldn) at (-0.7794,-0.45);

\begin{scope}
  \clip (Kzero) circle (\r);

  \draw[contour] (Lup) -- (Kzero);
  \draw[contour] (Ldn) -- (Kzero);

  \draw[contour] (Kzero) -- (0, 0.88);
  \draw[contour] (Kzero) -- (0,-0.88);

  \draw[dirarrow] (-0.62, 0.358) -- (-0.34, 0.196);
  \draw[dirarrow] (-0.62,-0.358) -- (-0.34,-0.196);

  \draw[dirarrow] (0,0.22) -- (0,0.62);
  \draw[dirarrow] (0,-0.22) -- (0,-0.62);
\end{scope}

\draw[dashedcircle] (Kzero) circle (\r);

\node at (-0.48, 0.44) {{\small $\Sigma^{\eps}_1$}};
\node at (-0.48,-0.44) {{\small $\Sigma^{\eps}_2$}};
\node at (0.18, 0.56) {{\small $\Sigma^{\eps}_3$}};
\node at (0.18,-0.56) {{\small $\Sigma^{\eps}_4$}};

\end{tikzpicture}
\caption{The local contour $\Sigma^{\eps}=\cup_{j=1}^4 \Sigma^{\eps}_j$.}
\label{fig:local-kzero}
\end{figure}

The local variables \(y\) and \(z\) are defined as in~\eqref{E:yz}, while
\(d_0\), \(d_1\), and the prefactor matrix \(Y\) are those introduced in
Section~\ref{S:dza}.  We define
\be \label{E:xiyMlo}
M^{\mathrm{loc}}(x,t,k)
=
Y M^X(z;y)Y^{-1}, 
\qquad k\in\cD_{\eps}, \qquad (x,t) \in \cP_+.
\ee
Here \(M^X\) is the solution of the model RH problem~\ref{RHmP-minus}.  Although this definition of \(M^{\mathrm{loc}}\) is not
identical to the one used for \((x,t)\in\cP_-\), we use the same notation; the
meaning will be clear from the context.

We now prove that the local parametrix \(M^{\mathrm{loc}}\) approximates
\(M^{(4)}\) with sufficient accuracy inside \(\cD_{\eps}\).  The following
exponential estimates are the key input for the matching argument.

\begin{lemma}\label{L:exp-est-Pplus}
Let \(\eps>0\) be sufficiently small.  Then there exist constants \(C,c>0\)
such that, for all sufficiently large \(t\) and uniformly for
\((x,t)\in\cP_+\),
\begin{subequations}\label{E:exp-est}
\begin{align}
\left|e^{2\ii t\theta(\xi,k)}\right|
&\le
C e^{-ct|k-k_c|^3},
\qquad
k\in \Sigma_1^\eps\cup\Sigma_4^\eps,
\label{E:exp-est-a}
\\
\left|e^{-2\ii t\theta(\xi,k)}\right|
&\le
C e^{-ct|k-k_c|^3},
\qquad
k\in \Sigma_2^\eps\cup\Sigma_3^\eps.
\label{E:exp-est-b}
\end{align}
\end{subequations}
\end{lemma}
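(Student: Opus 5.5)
The plan is a direct estimate of $\mathrm{Re}\bigl(2\ii t\theta(\xi,k)\bigr)$ on each local ray, using the cubic Taylor expansion of $\theta$ at $k_c$ recorded in Section~\ref{sub:lG}. I take the components $\Sigma_j^\eps$ to be straight segments emanating from $k_c$ inside $\cD_\eps$. For $\Sigma_1^\eps$ and $\Sigma_2^\eps$ the angles are $\pm 5\pi/6$, and for $\Sigma_3^\eps$ and $\Sigma_4^\eps$ they are $\pm\pi/2$. This is compatible with Figure~\ref{fig:local-kzero}, and the contours can be chosen this way since they may be deformed freely inside the analyticity neighborhood of $r$.

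\textbf{Reduction to $\Sigma_1^\eps\cup\Sigma_4^\eps$.} Since $\lambda$ is real on $\R\setminus\{0\}$, we have $\theta(\xi,\bar k)=\overline{\theta(\xi,k)}$ near $k_c$. Hence $|e^{-2\ii t\theta(\xi,\bar k)}|=|e^{2\ii t\theta(\xi,k)}|$. The contour $\Sigma_2^\eps$ is the reflection of $\Sigma_1^\eps$, and $\Sigma_3^\eps$ is the reflection of $\Sigma_4^\eps$. Therefore \eqref{E:exp-est-b} follows from \eqref{E:exp-est-a}.

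\textbf{Phase expansion.} By \eqref{E:thetaxikc}, $\theta(\xi,k_c)\in\R$. Writing $k-k_c=ue^{\ii\varphi}$ and $a=\tfrac{8\sqrt6}{9q_o}$, the expansion $2t[\theta-\theta(\xi,k_c)]=yz+z^3+2tS$ gives
\[
\mathrm{Re}\bigl(2\ii t\theta(\xi,k)\bigr)=-a\,t\,u^3\sin 3\varphi\;-\;y\,a^{1/3}t^{1/3}u\sin\varphi\;-\;2t\,\mathrm{Im}\,S(\xi,k).
\]

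\textbf{Cubic term.} On $\varphi=5\pi/6$ and on $\varphi=-\pi/2$ we have $\sin3\varphi=1$. So the cubic term equals $-atu^3$ on both rays.

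\textbf{Linear term.} In $\cP_+$ we have $0\le y\le C'$. On $\Sigma_1^\eps$ the linear term is $\le 0$, since $\sin\varphi=1/2$. On $\Sigma_4^\eps$ it equals $+y\,a^{1/3}t^{1/3}u\le C w$ with $w:=t^{1/3}u$. This is exactly the short exponentially growing piece $[k_4,k_c]$ discussed above, and it is harmless: the elementary inequality $Cw\le C_1+\tfrac a4 w^3$ shows it costs only a bounded factor.

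\textbf{Remainder $S$.} By \eqref{E:Sdef} and $|\xi-\xi_c|\le Ct^{-2/3}$,
\[
t|S|\le C t^{1/3}u^2+Ctu^4\le C w^2+C\eps\, tu^3 .
\]
We use $w^2\le C_2+\tfrac a4 w^3$, and choose $\eps$ so small that $C\eps\le a/4$.

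\textbf{Conclusion.} Combining the three bounds yields $\mathrm{Re}(2\ii t\theta)\le C_3-\tfrac a4 tu^3$, which is \eqref{E:exp-est-a} with $c=a/4$.

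The only delicate point is the positive linear contribution on the vertical ray $\Sigma_4^\eps$ (and on $\Sigma_3^\eps$ after conjugation). The key observation is that the transition scaling makes $y$ bounded. Consequently the growth is a function of $w=t^{1/3}u$ alone, and it is dominated by $w^3$ up to a constant. This matches the earlier computation $|k_3-k_c|=\mathcal O(t^{-1/3})$.
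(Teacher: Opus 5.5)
Your proof is correct and is essentially the paper's argument. You expand $\theta$ to cubic order at $k_c$ and absorb the only non-negative contributions into the cubic term plus a bounded constant by Young's inequality, using $\xi-\xi_c=\mathcal O(t^{-2/3})$; those contributions are the linear term on the vertical ray and the remainder $S$. The paper does this in the unscaled variable $\Delta=\xi-\xi_c$ on $\Sigma_3^\eps$, whereas you work with $w=t^{1/3}|k-k_c|$ and bounded $y$ on $\Sigma_1^\eps\cup\Sigma_4^\eps$. Your Schwarz-symmetry reduction and the explicit check that $\sin 3\varphi=1$ on both ray types simply spell out what the paper leaves as ``the other estimates follow in the same way.''
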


\begin{proof}
We give the proof on a representative part of the local contour; the remaining
estimates are obtained in the same way.  Parametrize $\Sigma^{\eps}_3$ by
\[
        k=k_c+\ii s,\qquad 0\le s\le\eps.
\]
Near \(k_c\), the phase function admits the expansion
\[
\theta(\xi,k)
=
\theta(\xi,k_c)
+\frac{\xi-\xi_c}{\sqrt{3}}(k-k_c)
+\frac{4\sqrt{6}}{9q_o}(k-k_c)^3
+R(\xi,k),
\]
where \(\theta(\xi,k_c)\in\R\), and the remainder satisfies
\[
|R(\xi,k)|
\le
C_1\left(
|\xi-\xi_c|\,|k-k_c|^2+|k-k_c|^4
\right).
\]
For convenience, set
\[
        A=\frac{4\sqrt{6}}{9q_o},
        \qquad
        B=\frac{1}{\sqrt{3}},
        \qquad
        \Delta=\xi-\xi_c .
\]
Since \(\theta(\xi,k_c)\in\R\), the constant term in the local expansion does
not contribute to the imaginary part.  Substituting \(k=k_c+\ii s\) into the
above expansion and taking imaginary parts gives
\[
\operatorname{Im}\theta(\xi,k)
\le
B\Delta s
-As^3
+C_1\left(\Delta s^2+s^4\right).
\]

We now control the positive terms.  By Young's inequality, for
\(\Delta>0\) and \(s>0\),
\[
\Delta s
\le
\frac{A}{4B}s^3+C_2\Delta^{3/2},
\qquad
\Delta s^2
\le
\frac{A}{8C_1}s^3+C_3\Delta^3 .
\]
Hence, for some constant \(C_4>0\),
\[
B\Delta s+C_1\Delta s^2
\le
\frac{3A}{8}s^3
+C_4\left(\Delta^{3/2}+\Delta^3\right).
\]
Moreover, after decreasing \(\eps\) if necessary, we have
\[
C_1s^4\le \frac{A}{8}s^3,
\qquad 0<s\le \eps.
\]
It follows that
\[
\operatorname{Im}\theta(\xi,k)
\le
-\frac{A}{2}s^3
+C_4\left(\Delta^{3/2}+\Delta^3\right).
\]
In the transition region \(\cP_+\), we have \(0<\Delta\le C t^{-2/3}\).  Thus
there exists a constant \(C_5>0\) such that
\[
 t\,\operatorname{Im}\theta(\xi,k)
\le
-\frac{A}{2}ts^3+C_5 .
\]
Consequently, for suitable constants \(C,c>0\),
\[
\left|e^{-2\ii t\theta(\xi,k)}\right|
\le
C e^{-cts^3}
=
C e^{-ct|k-k_c|^3}.
\]
This proves the desired estimate on the chosen part of the contour.  The other
estimates follow in the same way.
\end{proof}

\begin{remark}\label{R:mpedphseest}
The proof of Lemma~\ref{L:exp-est-Pplus} also gives the corresponding
estimates for the model exponential factors:
\begin{subequations}\label{E:model-exp-est}
\begin{align}
\left|
\exp\{-\ii yz-\ii z^3\}
\right|
&\le
C e^{-ct|k-k_c|^3},
\qquad
k\in \Sigma_2^\eps\cup\Sigma_3^\eps,
\label{E:model-exp-est-b}
\\
\left|
\exp\{\ii yz+\ii z^3\}
\right|
&\le
C e^{-ct|k-k_c|^3},
\qquad
k\in \Sigma_1^\eps\cup\Sigma_4^\eps.
\label{E:model-exp-est-a}
\end{align}
\end{subequations}
\end{remark}

We are now ready to compare \(M^{\mathrm{loc}}\) with \(M^{(4)}\) in the disk
\(\cD_{\eps}\).

\begin{lemma}\label{L:Nloc-Pplus}
For each \((x,t)\in\cP_+\), the function \(M^{\mathrm{loc}}(x,t,k)\) defined
in~\eqref{E:xiyMlo} is analytic and bounded for
\(k\in\cD_{\eps}\setminus\Sigma^{\eps}\).  Across \(\Sigma^{\eps}\), it
satisfies
\[
M^{\mathrm{loc}}_+(x,t,k)
=
M^{\mathrm{loc}}_-(x,t,k)V^{\mathrm{loc}}(x,t,k).
\]
Moreover, for sufficiently large \(t\),
\begin{equation}\label{E:estVloc-1}
\begin{cases}
\|V^{(4)}-V^{\mathrm{loc}}\|_{L^\infty(\Sigma^\eps)}
\le C t^{-1/3}\log t,\\[1mm]
\|V^{(4)}-V^{\mathrm{loc}}\|_{L^1(\Sigma^\eps)}
\le C t^{-2/3}\log t,
\end{cases}
\qquad (x,t)\in\cP_+.
\end{equation}
Furthermore, as \(t\to+\infty\),
\begin{align}
\|(M^{\mathrm{loc}})^{-1}-I\|_{L^\infty(\partial\cD_\eps)}
&=\mathcal O(t^{-1/3}), \label{E:bzsy-1}
\\
(M^{\mathrm{loc}})^{-1}(x,t,k)-I
&=
-\frac{Y M^X_1Y^{-1}}
{\left(\frac{8\sqrt{6}}{9q_o}\right)^{1/3}t^{1/3}(k-k_c)}
+\mathcal O(t^{-2/3}),
\qquad k\in\partial\cD_\eps, \label{E:bzsy1-1}
\end{align}
where \(M^X_1\) is given by~\eqref{E:defMX1dd}.
\end{lemma}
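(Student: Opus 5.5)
We follow the proof of Lemma~\ref{L:Nloc} almost verbatim. The only new inputs are the exponential bounds of Lemma~\ref{L:exp-est-Pplus} and Remark~\ref{R:mpedphseest}, which replace the sign arguments valid for \(\xi<\xi_c\).

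\textbf{Analyticity and the jump relation.} Since \(k\mapsto z(k)\) is a biholomorphism of \(\cD_\eps\) onto a disk, and \(Y\) is a constant matrix in \(k\), analyticity and the jump relation follow directly from the corresponding properties of \(M^X\) in RH problem~\ref{RHmP-minus}. The local contour \(\Sigma^\eps\) is mapped onto the model contour, the four rays emanating from \(z=0\).

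Boundedness on \(\cD_\eps\setminus\Sigma^\eps\) uses that \(M^X(z;y)\) is bounded uniformly for \(z\) in compact sets and \(y\) in the bounded interval \([0,C']\). This interval is the image of \(\cP_+\) under \eqref{E:yz}. We also use that \(M^X\) tends to \(I\) as \(z\to\infty\). The tritronqu\'ee \(\Q\) is pole-free on the real axis by the results in Appendix~\ref{App:PII} (Miller), so \(M^X\) exists there. Hence \(y\mapsto M^X(\cdot;y)\) is continuous, and its bounds are uniform on this compact \(y\)-set.

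\textbf{Estimate of \(V^{(4)}-V^{\rm loc}\).} As before, we write
\[
V^{(4)}-V^{\rm loc}=Y\bigl(\widetilde V-V^X(z(k))\bigr)Y^{-1}.
\]
On each \(\Sigma_j^\eps\), \(j=1,\dots,4\), this difference has a single off-diagonal entry \(g_j\) of the same form as \eqref{E:g1}. We split \(g_j\) into three pieces:
\begin{itemize}
\item[(i)] the variation of the reflection-coefficient factor, of size \(C|k-k_c|\);
\item[(ii)] the difference of exponentials, bounded by \(t|S(\xi,k)|\) times the decaying factors;
\item[(iii)] the factor \(d_1-1\), of size \(C(1+|\log|k-k_c||)|k-k_c|\).
\end{itemize}
On \(\Sigma_1^\eps\) and \(\Sigma_2^\eps\) the factor \(r\) (respectively \(\bar r\)) is divided by \(1+r\bar r\). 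On \(\Sigma_3^\eps\) and \(\Sigma_4^\eps\) it is not. The model jump \(V^X\) is built with the matching constants \(r_0/(1+r_0^2)\) or \(r_0\), so piece (i) stays \(\mathcal O(|k-k_c|)\).

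Lemma~\ref{L:exp-est-Pplus} and Remark~\ref{R:mpedphseest} give the decay \(Ce^{-ct|k-k_c|^3}\) for both exponentials, uniformly in \((x,t)\in\cP_+\). This holds even on the short vertical pieces between \(k_c\) and \(k_{3,4}\), where \(\mathrm{Re}(\ii\theta)\) has the "wrong" sign: there the growth is at most \(e^{C_5}\), which is absorbed into the constant. The inequality
\[
|e^{\omega_1}-e^{\omega_2}|\le\bigl(|e^{\omega_1}|+|e^{\omega_2}|\bigr)|\omega_1-\omega_2|
\]
then handles piece (ii). Combining this with \(|S|\le C(\Delta u^2+\Delta u^3+u^4)\) and \(0\le\Delta\le Ct^{-2/3}\), where \(u=|k-k_c|\), the same suprema of \(u^m e^{-ctu^3}\) as in Lemma~\ref{L:Nloc} give the \(L^\infty\) bound \(Ct^{-1/3}\log t\). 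Integrating in \(u\) instead gives the \(L^1\) bound \(Ct^{-2/3}\log t\).

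\textbf{Matching on \(\partial\cD_\eps\).} On \(\partial\cD_\eps\) we have \(|z|\asymp t^{1/3}\). The large-\(z\) expansion
\[
M^X(z;y)=I+\frac{M_1^X(y)}{z}+\mathcal O(z^{-2})
\]
from \eqref{E:defMX1dd} holds uniformly for \(y\) in compacts. It yields
\[
(M^X)^{-1}=I-\frac{M_1^X}{z}+\mathcal O(z^{-2}).
\]
Conjugating by the bounded matrix \(Y\), which satisfies \(|Y_1|=1\) because \(\theta(\xi,k_c)\in\R\) and \(|d_0|\) is bounded, and substituting \(z=(8\sqrt6/(9q_o))^{1/3}t^{1/3}(k-k_c)\), gives \eqref{E:bzsy1-1} and hence \eqref{E:bzsy-1}.

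\textbf{Main obstacle.} The step I expect to be delicate is the uniformity in \(y\) of the model-problem bounds on the segment \(y\in[0,C']\) (in contrast to \(\cP_-\), where \(y\le 0\)). This requires the solvability of RH problem~\ref{RHmP-minus}, equivalently the absence of poles of \(\Q\) on \(\R\), which I take from the appendix. The continuity argument above then upgrades pointwise bounds to uniform ones.
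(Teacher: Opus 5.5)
Your proposal is correct and follows essentially the same route as the paper. You conjugate by $Y$, reduce $V^{(4)}-V^{\rm loc}$ on each $\Sigma_j^\eps$ to a single off-diagonal entry, and split it into the reflection-coefficient, phase-remainder and $d_1-1$ pieces. You then use Lemma~\ref{L:exp-est-Pplus} and Remark~\ref{R:mpedphseest}, absorbing the bounded growth near $k_c$ into the constant, to rerun the $\cP_-$ suprema and integrals. The paper writes this out only on $\Sigma_3^\eps$ and leaves analyticity, boundedness and the matching to the $\cP_-$ argument, which you supply via the large-$z$ expansion of $M^X$ uniformly for $y\in[0,C']$. One small point: $|Y_1|=1$ requires $\mathrm{Re}\,(\chi(k_c)-\ii g(k_c))=0$ rather than mere boundedness of $d_0$, but only boundedness of $Y^{\pm1}$ is actually used.
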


\begin{proof}
The proof follows the same lines as the corresponding local-parametrix estimate
for \(\cP_-\).  We only prove the jump estimate~\eqref{E:estVloc-1} on
\(\Sigma_3^\eps\); the estimates on the other parts of the contour are
obtained similarly.

On \(\Sigma_3^\eps\), the local jump satisfies
\(V_3^{\mathrm{loc}}=YV_3^X Y^{-1}\).  Therefore,
\[
V_3^{(4)}-V_3^{\mathrm{loc}}
=
Y\bigl(\widetilde V_3-V_3^X\bigr)Y^{-1},
\qquad
\widetilde V_3=Y^{-1}V_3^{(4)}Y.
\]
For \(k\in\Sigma_3^\eps\), a direct calculation gives
\[
\widetilde V_3-V_3^X
=
\begin{pmatrix}
0&0\\
\widetilde g_1(k)&0
\end{pmatrix},
\]
where
\[
\begin{aligned}
\widetilde g_1(k)
&=
r(k)e^{-\ii\arg r(k_c)}
e^{-2\ii t(\theta(\xi,k)-\theta(\xi,k_c))}
z^{-2\ii\nu}d_1(k)
-r_0e^{-\ii(yz+z^3)}z^{-2\ii\nu}.
\end{aligned}
\]
Here \(r_0=|r(k_c)|\).  Hence
\[
\begin{aligned}
|\widetilde g_1(k)|
&\le
\left|
\bigl(r(k)e^{-\ii\arg r(k_c)}-r_0\bigr)
e^{-2\ii t(\theta(\xi,k)-\theta(\xi,k_c))}
z^{-2\ii\nu}d_1(k)
\right|
\\
&\quad+
\left|
r_0d_1(k)z^{-2\ii\nu}
\left(
e^{-2\ii t(\theta(\xi,k)-\theta(\xi,k_c))}
-e^{-\ii(yz+z^3)}
\right)
\right|
\\
&\quad+
\left|
r_0z^{-2\ii\nu}e^{-\ii(yz+z^3)}
\bigl(d_1(k)-1\bigr)
\right|.
\end{aligned}
\]
By Lemma~\ref{L:exp-est-Pplus}, Remark~\ref{R:mpedphseest} and the elementary inequality
\[
|e^{\omega_1}-e^{\omega_2}|
\le
\left(|e^{\omega_1}|+|e^{\omega_2}|\right)|\omega_1-\omega_2|,
\qquad \omega_1,\omega_2\in\C,
\]
we have
\[
\left|
e^{-2\ii t(\theta(\xi,k)-\theta(\xi,k_c))}
-e^{-\ii(yz+z^3)}
\right|
\le
Ct|S(\xi,k)|e^{-ct|k-k_c|^3},
\qquad k\in\Sigma_3^\eps.
\]
Therefore,
\[
|\widetilde g_1(k)|
\le
C|k-k_c|e^{-ct|k-k_c|^3}
+Ct|S(\xi,k)|e^{-ct|k-k_c|^3}
+C|d_1(k)-1|e^{-ct|k-k_c|^3}.
\]
Thus \(\widetilde g_1(k)\) obeys the same bound as the function \(g_1(k)\)
defined in~\eqref{E:g1}.  Repeating the same estimates as in the proof of the
corresponding result for \(g_1(k)\), we obtain the desired
\(L^\infty\)- and \(L^1\)-bounds on \(\Sigma_3^\eps\).  The other pieces of
\(\Sigma^\eps\) are treated in exactly the same way, which proves
\eqref{E:estVloc-1}. 
\end{proof}

\medskip
\noindent\textbf{The remaining steps.}
For \((x,t)\in\cP_+\), the outer parametrix is still given by
\(M^{\mathrm{out}}\) defined in~\eqref{E:Mout}.  Therefore, the small-norm RH
problem can be introduced and analyzed exactly as in Subsection~\ref{su:smn}.
Repeating the computations in Subsections~\ref{su:smn} and~\ref{S:pthmain}, we
obtain the same asymptotic formula~\eqref{E:asygs} for \(q(x,t)\), uniformly
for \((x,t)\in\cP_+\) as \(t\to\infty\).

\section{Concluding remarks}

In this paper, we have studied the long-time asymptotics of the focusing NLS
equation with symmetric NZBCs by using the Deift--Zhou nonlinear steepest
descent method.  The main contribution of the paper is twofold.  First, we
complete the Biondini--Mantzavinos long-time asymptotic picture by deriving the
missing boundary-layer formula at the interface between the plane-wave and
modulated elliptic-wave regions.  Second, we show that this transition regime
is governed by a distinguished tritronqu\'ee solution of an inhomogeneous
Painlev\'e-II equation. 

It is also useful to compare our result with the corresponding defocusing
theory.  Wang and Fan~\cite{WF2023} recently showed that, for the scalar defocusing NLS
equation with finite-density initial data, the generic Painlev\'e transition
asymptotics are related to the Hastings--McLeod solution of the homogeneous
Painlev\'e-II equation.  By contrast, the present work shows that the
transition asymptotics for the focusing NLS equation are described by a
tritronqu\'ee solution of an inhomogeneous Painlev\'e-II equation. 

A natural direction for future study is whether the present results can be extended to the
focusing vector NLS system.  To the best of the authors' knowledge, no
long-time asymptotic result is currently available for the focusing vector NLS
equation with NZBCs, although the corresponding IST has been developed in~\cite{BD2015-2,P2023,LSG-2023}.  Our ongoing
study of the focusing Manakov system~\cite{ZGP2026} suggests that the large-time division of
the \((x,t)\)-half-plane is the same as in the scalar focusing NLS case: there
are two plane-wave regions, a central modulated elliptic-wave region, and two
transition regions between them.  It is therefore natural to expect that the
transition asymptotics for the focusing vector problem should exhibit
Painlev\'e-type structures analogous to those found in the scalar problem.
Indeed, this expectation was one of the main motivations for the present work,
whose purpose is to clarify the transition mechanism for the scalar and vector focusing NLS
equations in a systematic way.

We hope that the present work will stimulate further research in this
direction.


\appendix
\section{Painlev\'e-II parametrix}\label{App:PII}
\begin{figure}[htbp]
\centering
\begin{tikzpicture}[
    scale=1.08,
    >=latex,
    line cap=round,
    line join=round,
    every node/.style={font=\small},
    contour/.style={draw=black, line width=1.6pt},
    arrowcontour/.style={
        contour,
        postaction={decorate},
        decoration={
            markings,
            mark=at position 0.55 with {\arrow{Stealth}}
        }
    }
]

\begin{scope}[xshift=-3.9cm]
    \coordinate (A) at (-0.9,0);
    \coordinate (B) at (1.0,0);

    \draw[arrowcontour] (-2.8,1.05) -- (A);
    \draw[arrowcontour] (-2.8,-1.05) -- (A);
    \draw[arrowcontour] (A) -- (B);
    \draw[arrowcontour] (B) -- (1.0,1.35);
    \draw[arrowcontour] (B) -- (1.0,-1.35);

    \node at (-2.25,1.20) {$1$};
    \node at (-2.35,-1.28) {$2$};
    \node at (1.33,0.95) {$3$};
    \node at (1.33,-1.05) {$4$};
    \node at (0.10,-0.25) {$5$};
\end{scope}

\draw[draw=black, line width=0.5pt, dashed] (-0.55,-2.05) -- (-0.55,2.05);

\begin{scope}[xshift=3.15cm, scale=0.72]
    \coordinate (O) at (0,0);

    \draw[arrowcontour] (-3.0,1.75) -- (O);
    \draw[arrowcontour] (-3.0,-1.75) -- (O);
    \draw[arrowcontour] (O) -- (0,2.45);
    \draw[arrowcontour] (O) -- (0,-2.45);

    \node at (-2.0,1.65) {$1$};
    \node at (-2.0,-1.65) {$2$};
    \node at (0.30,1.80) {$3$};
    \node at (0.30,-1.80) {$4$};
\end{scope}

\end{tikzpicture}
\caption{The local contour \(\Sigma^X\).  Left: the case
\((x,t)\in\cP_-\).  Right: the case \((x,t)\in\cP_+\).}
\label{fig:SigmaX}
\end{figure}

In this appendix we formulate the local Painlev\'e-II model problems used in
Sections~\ref{S:dza} and~\ref{S:dzcpplus}.  The two subregions
\(\cP_-\) and \(\cP_+\) give rise to slightly different local contours.
Nevertheless, both model problems can be reduced to the same inhomogeneous
Painlev\'e-II RH problem.  After recalling the relevant results
from~\cite{Miller}, we apply them to the present local models and derive the
large-\(z\) expansion of the corresponding solutions.  In particular, the
coefficients in these expansions are the same in the two cases.

Define
\[
        \Phi(z;y):=yz+z^3.
\]
Let \(r_0>0\) and \(\nu<0\) be defined by~\eqref{E:r0} and~\eqref{E:nu},
respectively.  We set
\[
\Sigma^X=
\begin{cases}
\displaystyle \bigcup_{j=1}^5\Sigma_j^X,  & (x,t)\in\cP_-,\\[1mm]
\displaystyle \bigcup_{j=1}^4\Sigma_j^X,  & (x,t)\in\cP_+,
\end{cases}
\]
where the contours are displayed in Figure~\ref{fig:SigmaX}.  The local model
RH problem is the following.

\begin{RHP}\label{RHmP-minus}
Find a \(2\times2\) matrix-valued function \(M^X(z;y)\) with the following
properties.

\begin{itemize}
\item \textbf{Analyticity.}
The function \(M^X(\cdot;y)\) is analytic for \(z\in\C\setminus\Sigma^X\).

\item \textbf{Jump condition.}
The boundary values of \(M^X\) on \(\Sigma^X\) exist continuously and satisfy
\[
        M^X_+(z;y)=M^X_-(z;y)V^X(z;y),
        \qquad z\in\Sigma^X.
\]
For \((x,t)\in\cP_-\), the jump matrices are
\(V^X=V_j^X\) on \(\Sigma_j^X\), \(j=1,\ldots,5\).  For
\((x,t)\in\cP_+\), the jump matrices are \(V^X=V_j^X\) on
\(\Sigma_j^X\), \(j=1,\ldots,4\).  They are given by
\be \label{E:Vjexp-minus}
\begin{aligned}
V_1^X(z)
&=
\begin{pmatrix}
1
&
\dfrac{r_0}{1+r_0^2}
\e^{\ii\Phi(z;y)}
z^{2\ii\nu}
\\[1.2ex]
0
&
1
\end{pmatrix},
\qquad
V_2^X(z)
=
\begin{pmatrix}
1
&
0
\\[1.2ex]
\dfrac{r_0}{1+r_0^2}
\e^{-\ii\Phi(z;y)}
z^{-2\ii\nu}
&
1
\end{pmatrix},
\\[2ex]
V_3^X(z)
&=
\begin{pmatrix}
1
&
0
\\[1.2ex]
r_0
\e^{-\ii\Phi(z;y)}
z^{-2\ii\nu}
&
1
\end{pmatrix},
\qquad
V_4^X(z)
=
\begin{pmatrix}
1
&
r_0
\e^{\ii\Phi(z;y)}
z^{2\ii\nu}
\\[1.2ex]
0
&
1
\end{pmatrix},
\\[2ex]
V_5^X(z)
&=
\begin{pmatrix}
1
&
\dfrac{r_0}{1+r_0^2}
\e^{\ii\Phi(z;y)}
z_+^{2\ii\nu}
\\[1.2ex]
\dfrac{r_0}{1+r_0^2}
\e^{-\ii\Phi(z;y)}
z_-^{-2\ii\nu}
&
1+r_0^2
\end{pmatrix}.
\end{aligned}
\ee
Here \(V_5^X\) appears only in the case \((x,t)\in\cP_-\).  We use the notation  \(z_\pm^{\pm 2\ii\nu}\) to  denote the corresponding boundary values on
\(\Sigma_5^X\).

\item \textbf{Normalization.}
As \(z\to\infty\),
\(
        M^X(z;y)=I+\mathcal O(z^{-1}).
\)
\end{itemize}
\end{RHP}

\begin{figure}[htbp]
\centering
\begin{tikzpicture}[
    scale=1.2,
    >=latex,
    line cap=round,
    line join=round,
    every node/.style={font=\small},
    contour/.style={draw=black, line width=1.6pt},
    arrowcontour/.style={
        contour,
        postaction={decorate},
        decoration={
            markings,
            mark=at position 0.55 with {\arrow{Stealth}}
        }
    }
]

    \coordinate (C)  at (-1.3,0);
    \coordinate (D)  at ( 1.25,0);

    \coordinate (S1) at (-0.25, 1.35);
    \coordinate (S2) at (-0.25,-1.35);

    \coordinate (U2) at (-0.45, 0);
    \coordinate (L2) at (-0.45, 0);

    \coordinate (U1) at (-2.507, 1.35);
    \coordinate (L1) at (-2.507,-1.35);

    \fill[gray!25] (U1) -- (S1) -- (D) -- (U2) -- cycle;
    \fill[gray!25] (L1) -- (S2) -- (D) -- (L2) -- cycle;

    \draw[arrowcontour, dashed] (U1) -- (U2);
    \draw[arrowcontour, dashed] (L1) -- (L2);

    \draw[arrowcontour] (-2.9,0) -- (C);

    \draw[arrowcontour] (C) -- (D);

    \draw[arrowcontour] (S1) -- (D);
    \draw[arrowcontour] (S2) -- (D);

    \draw[arrowcontour] (D) -- (1.25, 1.55);
    \draw[arrowcontour] (D) -- (1.25,-1.55);

    \node at (-0.35, 0.55) {\(\mathcal R_1^X\)};
    \node at (-0.35,-0.55) {\(\mathcal R_2^X\)};

\end{tikzpicture}
\caption{Schematic illustration of the gray regions corresponding to \(\{\mathcal R_j^X\}_{j=1}^2\). The black solid lines indicate the jump contour for \(\widehat{M}^X\).}
\label{fig:RX-regions}
\end{figure}

To relate this RH problem to the standard Painlev\'e-II model, we first perform
a simple contour deformation.   We let
\be \label{E:widehatM}
\widehat M^X(z)
:=
M^X(z) z^{\mathrm{i}\nu\sigma_3} \times
\begin{cases}
\displaystyle G(z),  & (x,t)\in\cP_-,\\[1mm]
\displaystyle I,  & (x,t)\in\cP_+,
\end{cases}
\ee
where
\[
G(z)=
\begin{cases}
\begin{pmatrix}
1
&
-\dfrac{r_0}{1+r_0^2}
\e^{\ii \Phi (z,y)}
\\[1.5ex]
0 & 1
\end{pmatrix},
&
z\in \cR_1^X,
\\[4ex]
\begin{pmatrix}
1 & 0
\\[1.5ex]
\dfrac{r_0}{1+r_0^2}
\e^{-\ii \Phi(z,y)}
&
1
\end{pmatrix},
&
z\in \cR_2^X,
\\[4ex]
I,
&
z\notin \cR_1^X\cup \cR_2^X.
\end{cases}
\]
The regions \(\{\mathcal R_j^X\}_{j=1}^2\) and the jump contour for $\widehat M^X(z)$ are shown in Figure~\ref{fig:RX-regions}.

Then \(\widehat M^X(z)\) can be expressed in terms of the solution of  the following RH problem:

\begin{RHP}[Jimbo--Miwa Painlev\'e-II problem]
\label{rhp:PII-generalized}
Let \(y,p,\tau\in\mathbb{C}\) be related by
\(
\tau^2=e^{2\pi p}-1
\).
Seek a \(2\times 2\) matrix-valued function
\(
W(z;y)=W(z;y,p,\tau)
\)
with the following properties.

\begin{itemize}
\item[]\textbf{Analyticity.}
The matrix \(W(z;y)\) is analytic for \(z\) in the five sectors
 $S_0$: $|\arg(z)|<\tfrac{1}{2}\pi$, $S_1$: $\tfrac{1}{2}\pi<\arg(z)<\tfrac{5}{6}\pi$, $S_{-1}$: $-\tfrac{5}{6}\pi<\arg(z)<-\tfrac{1}{2}\pi$, $S_2$: $\tfrac{5}{6}\pi<\arg(z)<\pi$, and $S_{-2}$: $-\pi<\arg(z)<-\tfrac{5}{6}\pi$.
It takes continuous boundary values on the excluded rays and at the origin from
each sector.

\item[] \textbf{Jump conditions.}
The boundary values satisfy
\[
W_+(z;y)
=
W_-(z;y)V^{\mathrm{PII}}(z;y),
\]
where \(V^{\mathrm{PII}}(z;y)\) is the matrix defined on the jump
contour shown in Figure~\ref{fig:zeta-contour-jumps}.

\item[]\textbf{Normalization.}
As \(z\to\infty\), uniformly in all directions,
\(
W (z;y)z^{\mathrm{i}p\sigma_3}\to I
\).
\end{itemize}
\end{RHP}

\begin{figure}
\centering
\begin{tikzpicture}[
    scale=1.15,
    >=latex,
    line cap=round,
    line join=round,
    every node/.style={font=\small},
    contour/.style={draw=black, line width=1.8pt},
    arrowcontour/.style={
        contour,
        postaction={decorate},
        decoration={
            markings,
            mark=at position 0.55 with {\arrow{Stealth}}
        }
    }
]

\coordinate (O) at (0,0);

\draw[arrowcontour] (-3.0,1.75) -- (O);
\draw[arrowcontour] (-3.0,-1.75) -- (O);

\draw[arrowcontour] (-3.0,0) -- (O);

\draw[arrowcontour] (O) -- (0,2.45);
\draw[arrowcontour] (O) -- (0,-2.45);

\node at (-0.45,0.75) {$S_1$};
\node at (-1.15,0.35) {$S_2$};
\node at (-1.10,-0.25) {$S_{-2}$};
\node at (-0.45,-1.10) {$S_{-1}$};
\node at (0.45,-0.05) {$S_0$};

\node at (-2.20,0.25) {$e^{2\pi p\sigma_3}$};

\node at (-2.20,2.20) {$
\begin{bmatrix}
1 & \tau e^{-2\pi p}e^{\mathrm{i}(z^3+yz)}
\\[0.4ex]
0 & 1
\end{bmatrix}
$};

\node at (1.65,1.35) {$
\begin{bmatrix}
1 & 0
\\[0.4ex]
\tau e^{-\mathrm{i}(z^3+yz)} & 1
\end{bmatrix}
$};

\node at (1.65,-1.55) {$
\begin{bmatrix}
1 & \tau e^{\mathrm{i}(z^3+yz)}
\\[0.4ex]
0 & 1
\end{bmatrix}
$};

\node at (-2.05,-2.20) {$
\begin{bmatrix}
1 & 0
\\[0.4ex]
\tau e^{-2\pi p}e^{-\mathrm{i}(z^3+yz)} & 1
\end{bmatrix}
$};


\end{tikzpicture}
\caption{The  jump matrices and jump  contour for $W(z;y)$.}
\label{fig:zeta-contour-jumps}
\end{figure}

The RH problem above was studied systematically in~\cite{Miller}.  In
particular, for the parameter range relevant here, it has a unique solution.
Moreover, the product \(W(z;y)z^{\ii p\sigma_3}\) admits a complete asymptotic
expansion
\be \label{E:Wasy}
W(z;y)z^{\mathrm{i}p\sigma_3}
\sim
I+\sum_{j=1}^{\infty} W_j(y) z^{-j},
\qquad
z\to\infty,
\ee
uniformly in all directions of the \(z\)-plane.  In addition, define
\[
\mathcal V(y)
:=
\lim_{z\to\infty}
z \,W_{21}(z;y)z^{\mathrm{i}p}
=
( W_1(y))_{21}.
\]
This function can be described in terms of a special solution of an
inhomogeneous Painlev\'e-II equation.  More precisely, there exists a unique
tritronqu\'ee solution \(\Q(y)\) of
\[
\frac{d^2 \Q}{dy^2}
+
\frac{2}{3}y \Q
-
2 \Q^3
-
\frac{2}{3}\mathrm{i}p
-
\frac{1}{3}
=0,
\]
which satisfies the asymptotic condition
\[
\Q(y)
=
\mathrm{i}\left(-\frac{y}{3}\right)^{1/2}
-
\left(\frac14+\frac{\mathrm{i}p}{2}\right)\frac{1}{y}
+
\mathcal{O}\left(|y|^{-5/2}\right),
\qquad
y\to-\infty.
\]
This solution is analytic for real \(y\) and has oscillatory/algebraic
asymptotics as \(y\to+\infty\).  

According to~\cite{Miller}, the logarithmic derivative of
\(\mathcal V(y)\) satisfies the above inhomogeneous Painlev\'e-II equation.
More precisely,
\[
\Q(y)=\frac{\mathcal V'(y)}{\mathcal V(y)}.
\]
For \(y\to-\infty\), one has
\[
\begin{aligned}
\mathcal V(y)
&=
\frac{\tau p\Gamma(\mathrm{i}p)}{2\sqrt{\pi}}\,
e^{-\frac{3\pi\mathrm{i}}{4}}
e^{-\frac{\pi p}{2}}
2^{-\mathrm{i}p}
e^{-2\mathrm{i}\left(-\frac{y}{3}\right)^{3/2}}
(-3 y)^{-\frac14-\frac{\mathrm{i}p}{2}}
\left[1+\mathcal{O}\left(|y|^{-5/4}\right)\right],
\\
&\hspace{8cm} y\to-\infty.
\end{aligned}
\]
Thus,  \(\mathcal V(y)\) can be written in the following form:
\be \label{E:defmathcalV}
\mathcal V(y)=
\begin{cases}
\displaystyle
C_{p,\tau}
e^{-\frac{2}{9} \sqrt{3}\ii \left(- y\right)^{3/2}}
(- 3 y)^{-\frac14-\frac{\ii p}{2}}
\\[1mm]
\displaystyle\qquad\qquad\qquad
{}\times
\exp\left\{
\int_{-\infty}^{y}
\left[
\Q(s)-\ii\left(-\frac{s}{3}\right)^{1/2}
+
\left(\frac14+\frac{\ii p}{2}\right)\frac{1}{s}
\right]\mathrm ds
\right\},
& y<0,
\\[5mm]
\displaystyle
\mathcal V(-1)\exp\left\{
\int_{-1}^{y}\Q(s)\,\mathrm ds
\right\},
& y\ge 0,
\end{cases}
\ee
where
\be \label{E:MX1esp}
C_{p,\tau}
=
\frac{\tau p\Gamma(\mathrm{i}p)}{2\sqrt{\pi}}\,
e^{-\frac{3\pi\mathrm{i}}{4}}
e^{-\frac{\pi p}{2}}
2^{-\mathrm{i}p}.
\ee
After the specialization \(p=-\nu\) and $\tau=r_0$, we may write
\[
C_{p,\tau}=\alpha_0e^{\mathrm{i}\phi_0},
\]
with
\[
\alpha_0=|C_{p,\tau}|=\sqrt{-\frac{\nu}{2}},
\qquad
\phi_0
=
-\frac{3}{4}\pi
+\nu\log 2
+\arg\Gamma(-\mathrm{i}\nu).
\]

We now apply these facts to the local model \(M^X\).    After the transformation~\eqref{E:widehatM},
both cases are reduced to the same Painlev\'e-II model with
\(p=-\nu\) and \(\tau=r_0\).  Moreover, the factor \(G(z)\) in
\eqref{E:widehatM} tends to the identity matrix exponentially fast as
\(z\to\infty\) along the relevant sectors.  Therefore it does not contribute
to the algebraic coefficients in the large-\(z\) expansion. 
Thus, in both cases, the local solutions have the same coefficients in their
large-\(z\) expansions.

\begin{lemma}
 \(M^X(z;y)\) admits the following expansion as \(z\to\infty\):
\be \label{E:asyMX}
M^X(z;y)
=
I+\frac{M_1^X(y)}{z}
+
\mathcal O\left(\frac{1}{z^2}\right),
\ee
uniformly for \(\arg z\in[0,2\pi]\).  Moreover,
\be \label{E:defMX1dd}
M_1^X(y)
=
\begin{pmatrix}
\widehat{ \mathcal V}(y) & -\mathcal V^*(y)
\\[0.6ex]
\mathcal V(y) & \widehat{ \mathcal V}^*(y)
\end{pmatrix},
\ee
where \(\mathcal V(y)\) is defined by~\eqref{E:defmathcalV} with the parameters \(p=-\nu\) and \(\tau=r_0\), and
\[
\widehat {\mathcal V}(y)=
\begin{cases}
\displaystyle
\ii\nu\left(-\frac{y}{3}\right)^{1/2}
+
\ii\int_{-\infty}^{y}
\left[
|\mathcal V (s)|^2
+
\frac{\nu}{2\sqrt{3}}(-s)^{-1/2}
\right]\,\mathrm ds,
& y<0,
\\[5mm]
\displaystyle
\widehat{\mathcal V}(-1)
+
\ii\int_{-1}^{y}|\mathcal V (s)|^2\,\mathrm ds,
& y\ge0.
\end{cases}
\]

\end{lemma}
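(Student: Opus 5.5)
The plan is to read off the expansion of $M^X$ from that of $W$ via the transformation~\eqref{E:widehatM}, and then to identify the entries of $W_1$ using the symmetry of the model problem together with the Lax pair of the Jimbo--Miwa problem.

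\textbf{Step 1: reduction to $W$.} First I will check that, with $p=-\nu$ and $\tau=r_0$, the function $\widehat M^X$ has exactly the jumps of Figure~\ref{fig:zeta-contour-jumps}. Here $\tau^2=e^{-2\pi\nu}-1=r_0^2$, and $e^{2\pi p}=1+r_0^2$, so the jump on the negative real axis is $(1+r_0^2)^{\sigma_3}$. The $\tau e^{-2\pi p}$ entries then match $r_0/(1+r_0^2)$. The factor $z^{\ii\nu\sigma_3}$ absorbs the powers $z^{\pm2\ii\nu}$ and produces the diagonal jump on $\R_-$. In the $\cP_-$ case, the factor $G$ moves the lens boundaries $\Sigma_1^X,\Sigma_2^X$ so that they emanate from the point where $\Sigma_3^X,\Sigma_4^X$ start. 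It also cancels the off-diagonal parts of $V_5^X$, leaving only $(1+r_0^2)^{\sigma_3}$ on the segment. Uniqueness then gives $\widehat M^X=W$. Since $G-I$ is exponentially small at infinity in $\cR_j^X$ (these regions are bounded anyway), we get
\[
M^X(z;y)=W(z;y)z^{\ii p\sigma_3}\,G(z)^{-1}
\]
for large $|z|$. By~\eqref{E:Wasy}, this yields~\eqref{E:asyMX} with $M_1^X=W_1$, uniformly in $\arg z$.

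\textbf{Step 2: symmetry.} Next I will show $M^X(z;y)=\sigma_2\overline{M^X(\bar z;y)}\sigma_2$ for real $y$. To do so, I will check that the jump matrices satisfy $V^X(z)=\sigma_2\overline{V^X(\bar z)}^{-1}\sigma_2$, with the orientation reversal under conjugation accounted for: $\Sigma_1^X\leftrightarrow\Sigma_2^X$, $\Sigma_3^X\leftrightarrow\Sigma_4^X$, and $\Sigma_5^X$ self-conjugate. Uniqueness then forces the symmetry. Expanding at infinity gives $(M^X_1)_{22}=\overline{(M^X_1)_{11}}$ and $(M^X_1)_{12}=-\overline{(M^X_1)_{21}}$. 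This produces the pattern~\eqref{E:defMX1dd} with $\mathcal V:=(W_1)_{21}$ and $\widehat{\mathcal V}:=(W_1)_{11}$. The formula~\eqref{E:defmathcalV} for $\mathcal V$ is then exactly Miller's representation via $\Q=\mathcal V'/\mathcal V$ together with the $y\to-\infty$ asymptotics, integrated from $-\infty$; I take this from~\cite{Miller}. The computation $|C_{p,\tau}|^2=\tau^2p^2|\Gamma(\ii p)|^2e^{-\pi p}/(4\pi)=-\nu/2$ uses $|\Gamma(\ii p)|^2=\pi/(p\sinh\pi p)$ and $\tau^2=e^{2\pi p}-1$.

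\textbf{Step 3: the diagonal entry, the main obstacle.} To get $\widehat{\mathcal V}$, I will use that $\Psi=We^{-\ii\Phi\sigma_3/2}$ has $y$-independent jumps. Hence $\Psi_y\Psi^{-1}$ is a polynomial, namely $-\tfrac{\ii z}{2}\sigma_3+\tfrac{\ii}{2}[\sigma_3,W_1]$. Comparing the $z^{-1}$ terms in $W_y=-\tfrac{\ii}{2}z[\sigma_3,W]+\dots$ gives
\[
\frac{d}{dy}(W_1)_{11}=\ii\,(W_1)_{12}(W_1)_{21}\cdot(\text{sign}),
\]
and the symmetry gives $(W_1)_{12}(W_1)_{21}=-|\mathcal V|^2$. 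With the sign fixed so that $\widehat{\mathcal V}'=\ii|\mathcal V|^2$, integrating gives the formula for $y\ge0$ from the base point $-1$. For $y<0$ I will integrate from $-\infty$. This needs the normalization $\widehat{\mathcal V}(y)=\ii\nu(-y/3)^{1/2}+o(1)$ as $y\to-\infty$, which I would extract from the $g$-function large-$|y|$ analysis in~\cite{Miller}; equivalently, a scaling $z=(-y/3)^{1/2}\zeta$ yields the leading term $-\ii p(-y/3)^{1/2}$ with $p=-\nu$. The integrand subtraction $\tfrac{\nu}{2\sqrt3}(-s)^{-1/2}$ is consistent because $|\mathcal V|^2\sim\alpha_0^2(-3s)^{-1/2}=-\tfrac{\nu}{2\sqrt3}(-s)^{-1/2}$. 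Pinning down the constant of integration, and checking the sign conventions carefully, is where I expect the real work to lie.
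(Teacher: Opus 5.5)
Your proposal follows essentially the same route as the paper: identify $\widehat M^X=W$ with $p=-\nu$, $\tau=r_0$; use the Schwarz symmetry to get the $\sigma_2$-pattern of $M_1^X=W_1$; obtain $\widehat{\mathcal V}'=-\ii W_{1,12}W_{1,21}=\ii|\mathcal V|^2$ from the $y$-equation for $W_1$ (the paper cites Miller's Eq.~(2.4), which is your Lax-pair computation); and fix the constant with Miller's asymptotics $W_{1,11}=-\ii p(-y/3)^{1/2}+\mathcal O(|y|^{-1})$. The one correction is the sign of the exponential: the function with $y$-independent jumps is $\Psi=We^{+\ii\Phi\sigma_3/2}$, so that $\Psi_y\Psi^{-1}=\tfrac{\ii z}{2}\sigma_3+\tfrac{\ii}{2}[W_1,\sigma_3]$ and the $z^{-1}$ coefficient gives $+\ii|\mathcal V|^2$ with no sign left to choose (with your $e^{-\ii\Phi\sigma_3/2}$ the jumps are not $y$-independent, and the formal computation gives the opposite sign), consistent with $\tfrac{d}{dy}\bigl[\ii\nu(-y/3)^{1/2}\bigr]=-\tfrac{\ii\nu}{2\sqrt3}(-y)^{-1/2}\sim\ii|\mathcal V|^2$.
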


\begin{proof}
A direct verification of the jump matrices shows that
\be \label{E:ygds}
        \widehat M^X(z;y)=W(z;y,p,\tau),
        \qquad p=-\nu,
        \qquad \tau=r_0.
\ee
Here \(y\in\R\), \(-\nu>0\), and \(r_0>0\).  With this choice of parameters,
the jump matrices satisfy the Schwarz symmetry
\[
\left[V^{\mathrm{PII}}(z^*;y)^*\right]^{-1}
=
\sigma_2 V^{\mathrm{PII}}(z;y)\sigma_2.
\]
By uniqueness of the RH problem, it follows that
\[
        W(z^*;y)^*=\sigma_2W(z;y)\sigma_2.
\]
Consequently, the coefficient \(W_1(y)\) in~\eqref{E:Wasy} satisfies
\[
        W_1(y)^*=\sigma_2W_1(y)\sigma_2,
\]
and hence has the form
\[
W_1(y)
=
\begin{pmatrix}
W_{1,11}(y) & -W_{1,21}^*(y)
\\[0.6ex]
W_{1,21}(y) & W_{1,11}^*(y)
\end{pmatrix}.
\]
Since \(G(z)\) contributes only exponentially small terms at infinity, it
follows from~\eqref{E:widehatM}, \eqref{E:ygds} and
\eqref{E:Wasy} that \(M^X\) admits the expansion~\eqref{E:asyMX} and that
\[
        M_1^X(y)=W_1(y).
\]
By definition,
\[
        W_{1,21}(y)=\mathcal V(y).
\]
It remains to identify \(W_{1,11}(y)\).

From~\cite[Eq. 2.4]{Miller}, one has
\[
\frac{d W_1(y)}{dy}
+
\frac{\ii}{2}[ W_2(y),\sigma_3]
-
\frac{\mathrm{i}}{2}
[ W_1(y),\sigma_3]  W_1(y)
=
0.
\]
Taking the \((1,1)\)-entry yields
\be \label{E:A11}
\frac{dW_{1,11}}{dy}
=
-\mathrm{i}W_{1,12}W_{1,21}
=
\mathrm{i}|W_{1,21}|^2
=
\mathrm{i}|\mathcal V(y)|^2.
\ee
On the other hand, Section~2.2.1 of~\cite{Miller} gives
\[
W_{1,11}(y)
=
-\mathrm{i}p\left(-\frac{y}{3}\right)^{1/2}
+
\mathcal{O}\left(|y|^{-1}\right),
\qquad
y\to-\infty.
\]
Furthermore,
\[
|\mathcal V(y)|^2
=
\frac{p}{2\sqrt{3}}(-y)^{-1/2} + \mathcal{O}(|y|^{-7/4}), 
\qquad
y\to-\infty.
\]
Therefore \(W_{1,11}(y)\) is determined by integration from \(-\infty\):
\[
W_{1,11}(y)
=
-\mathrm{i}p\left(-\frac{y}{3}\right)^{1/2}
+
\mathrm{i}
\int_{-\infty}^{y}
\left[
|\mathcal V(s)|^2
-
\frac{p}{2\sqrt{3}}(-s)^{-1/2}
\right]\,ds, \qquad y<0.
\]
Since the value of \(W_{1,11} (-1)\) can be computed from the formula above,
integrating both sides of~\eqref{E:A11} from \(-1\) to \(y\) gives the value of
\(W_{1,11}(y)\) for any \(y\ge 0\).
Thus, by setting \(\widehat{\mathcal V}=W_{1,11}\),  the asserted form of \(M_1^X(y)\) follows.
\end{proof}

\section*{Acknowledgments}
This work is supported by National Natural Science Foundation of China (Grant Nos. 12471234, 12471240, 12571268) and Science Foundation of Henan Academy of Sciences (Grant No. 20252319002).\\

\noindent{\bf Data Availability Statement} Data sharing not applicable to this article as no datasets were generated or analyzed during the current study.

\subsection*{Declarations}

{\bf Conflict of interest statement} We declare that there is no conflict of interests.



\begin{thebibliography}{99}
\addcontentsline{toc}{section}{References}


\bibitem{APT2004}
M. J. Ablowitz, B. Prinari, and A. D. Trubatch,
\emph{Discrete and Continuous Nonlinear Schr\"odinger Systems},
London Math. Soc. Lecture Note Ser., vol. 302, Cambridge University Press,
Cambridge, 2004.

\bibitem{AH1981}
M. J. Ablowitz and H. Segur,
\emph{Solitons and the Inverse Scattering Transform},
SIAM Stud. Appl. Math., vol. 4, Society for Industrial and Applied
Mathematics (SIAM), Philadelphia, PA, 1981.

\bibitem{EG2003}
E. Infeld and G. Rowlands,
\emph{Nonlinear Waves, Solitons and Chaos}, 2nd ed.,
Cambridge University Press, Cambridge, 2000.

\bibitem{CP1999}
C. Sulem and P.-L. Sulem,
\emph{The Nonlinear Schr\"odinger Equation: Self-Focusing and Wave Collapse},
Appl. Math. Sci., vol. 139, Springer-Verlag, New York, 1999.

\bibitem{Miller}
P. D. Miller, On the increasing tritronqu\'ee solutions of the Painlev\'e-II
equation, \emph{SIGMA Symmetry Integrability Geom. Methods Appl.}
\textbf{14} (2018), Paper No. 125, 38 pp.

\bibitem{BM2017}
G. Biondini and D. Mantzavinos, Long-time asymptotics for the focusing
nonlinear Schr\"odinger equation with nonzero boundary conditions at infinity
and asymptotic stage of modulational instability,
\emph{Comm. Pure Appl. Math.} \textbf{70} (2017), no. 12, 2300--2365.

\bibitem{BLM2021}
G. Biondini, S. Li, and D. Mantzavinos, Long-time asymptotics for the focusing
nonlinear Schr\"odinger equation with nonzero boundary conditions in the
presence of a discrete spectrum, \emph{Comm. Math. Phys.} \textbf{382}
(2021), no. 3, 1495--1577.



\bibitem{B1967-1}
T. B. Benjamin, Instability of periodic wavetrains in nonlinear dispersive
systems, \emph{Proc. Roy. Soc. London Ser. A} \textbf{299} (1967),
no. 1456, 59--75.

\bibitem{B1967-2}
T. B. Benjamin and J. E. Feir, The disintegration of wave trains on deep water.
Part 1. Theory, \emph{J. Fluid Mech.} \textbf{27} (1967), no. 3,
417--430.


\bibitem{AFM2019}
M. A. Alejo, L. Fanelli, and C. Mu\~noz, The Akhmediev breather is unstable,
\emph{S\~ao Paulo J. Math. Sci.} \textbf{13} (2019), no. 2, 391--401,
doi:10.1007/s40863-019-00145-4.


\bibitem{AFM2020}
M. A. Alejo, L. Fanelli, and C. Mu\~noz, Review on the stability of the
Peregrine and related breathers, \emph{Front. Phys.} \textbf{8} (2020),
591995, doi:10.3389/fphy.2020.591995.

\bibitem{ZO2009}  V.E. Zakharov and  L.A. Ostrovsky,  Modulation instability: the beginning, Phys. D. \textbf{238} (2009), 540–548.

\bibitem{BF2015}
G. Biondini and E. Fagerstrom, The integrable nature of modulational instability, \textit{SIAM J. Appl. Math.} \textbf{75} (2015), 136--163.

\bibitem{BLMT2018}
G. Biondini, S. Li, D. Mantzavinos, and S. Trillo, Universal behavior of modulationally unstable media, \textit{SIAM Rev.} \textbf{60} (2018), 888--908.

\bibitem{zs1971}
V. E. Zakharov and A. B. Shabat, Exact theory of two-dimensional
self-focusing and one-dimensional self-modulation of waves in nonlinear
media, \emph{Soviet Phys. JETP} \textbf{34} (1972), no. 1, 62--69.


\bibitem{pz93}
P. Deift and X. Zhou, A steepest descent method for oscillatory
Riemann--Hilbert problems. Asymptotics for the mKdV equation,
\emph{Ann. of Math. (2)} \textbf{137} (1993), no. 2, 295--368.

\bibitem{BK2014}
G. Biondini and G. Kova\v{c}i\v{c}, Inverse scattering transform for the
focusing nonlinear Schr\"odinger equation with nonzero boundary conditions,
\emph{J. Math. Phys.} \textbf{55} (2014), no. 3, 031506, 24 pp.


\bibitem{m2017}
C. Mu\~noz, Instability in nonlinear Schr\"odinger breathers,
\emph{Proyecciones} \textbf{36} (2017), no. 4, 653--683.

\bibitem{Monvel5}
A. Boutet de Monvel, J. Lenells, and D. Shepelsky, The focusing NLS equation
with step-like oscillating background: asymptotics in a transition zone,
\emph{J. Differential Equations} \textbf{429} (2025), 747--801.

\bibitem{LLY2026-JDE}
G. Li, L. Liu, and Y. Yang, Complex Painlev\'e type transient asymptotics of
the focusing NLS equation: step-like oscillating background,
\emph{J. Differential Equations} \textbf{465} (2026), Paper No. 114216, 35 pp.

\bibitem{DLM2020}
D. Bilman, L. Ling, and P. D. Miller, Extreme superposition: rogue waves of
infinite order and the Painlev\'e-III hierarchy, \emph{Duke Math. J.}
\textbf{169} (2020), no. 4, 671--760.

\bibitem{DP2024}
D. Bilman and P. D. Miller,
\emph{Extreme Superposition: High-Order Fundamental Rogue Waves in the
Far-Field Regime}, \emph{Mem. Amer. Math. Soc.} \textbf{300} (2024),
no. 1505, v+90 pp.

\bibitem{boutet2011}
A. Boutet de Monvel, V. P. Kotlyarov, and D. Shepelsky, Focusing NLS equation:
long-time dynamics of step-like initial data, \emph{Int. Math. Res. Not. IMRN}
\textbf{2011} (2011), no. 7, 1613--1653.

\bibitem{CLPa2020}
C. Charlier and J. Lenells, Airy and Painlev\'e asymptotics for the mKdV
equation, \emph{J. Lond. Math. Soc. (2)} \textbf{101} (2020), no. 1,
194--225.

\bibitem{BJM2018}
M. Borghese, R. Jenkins, and K. D. T.-R. McLaughlin, Long time asymptotic
behavior of the focusing nonlinear Schr\"odinger equation,
\emph{Ann. Inst. H. Poincar\'e C Anal. Non Lin\'eaire} \textbf{35}
(2018), no. 4, 887--920.

\bibitem{PZ2002}
P. Deift and X. Zhou, Long-time asymptotics for solutions of the NLS equation
with initial data in a weighted Sobolev space, \emph{Comm. Pure Appl. Math.}
\textbf{56} (2003), no. 8, 1029--1077.

\bibitem{BIS2010}
A. Boutet de Monvel, A. R. Its, and D. Shepelsky, Painlev\'e-type asymptotics
for the Camassa--Holm equation, \emph{SIAM J. Math. Anal.} \textbf{42}
(2010), no. 4, 1854--1873.

\bibitem{WF2023}
Z. Wang and E. Fan, The defocusing nonlinear Schr\"odinger equation with a
nonzero background: Painlev\'e asymptotics in two transition regions,
\emph{Comm. Math. Phys.} \textbf{402} (2023), no. 3, 2879--2930.

\bibitem{CuJe2016}
S. Cuccagna and R. Jenkins, On the asymptotic stability of $N$-soliton
solutions of the defocusing nonlinear Schr\"odinger equation,
\emph{Comm. Math. Phys.} \textbf{343} (2016), no. 3, 921--969.

\bibitem{CL2024main}
C. Charlier and J. Lenells, On Boussinesq's equation for water waves,
preprint, arXiv:2204.02365.

\bibitem{WZZ2026}
D.-S. Wang, D. Zhu, and X. Zhu, Genus two KdV soliton gases and their
long-time asymptotics, \emph{Forum Math. Sigma} \textbf{14} (2026),
Paper No. e57, 47 pp.

\bibitem{HWZ2026}
L. Huang, D.-S. Wang, and X. Zhu, Long-time asymptotics of the Tzitz\'eica
equation on the line, \emph{Math. Ann.} \textbf{395} (2026), no. 1,
Paper No. 18.

\bibitem{YTL2025}
J.-J. Yang, S.-F. Tian, and Z.-Q. Li, The modified Camassa--Holm equation
with nonzero background: soliton resolution conjecture and asymptotic
stability of $N$-soliton solutions, \emph{Adv. Math.} \textbf{484}
(2026), Paper No. 110552.

\bibitem{FLYZ2026}
E. Fan, G. Li, Y. Yang, and L. Zhang, Painlev\'e XXXIV asymptotics for the
defocusing nonlinear Schr\"odinger equation with a finite-genus
algebro-geometric background, \emph{Math. Ann.} \textbf{394} (2026),
Paper No. 44, 58 pp.

\bibitem{CLW2023}
C. Charlier, J. Lenells, and D.-S. Wang, The ``good'' Boussinesq equation:
long-time asymptotics, \emph{Anal. PDE} \textbf{16} (2023), no. 6,
1351--1388.

\bibitem{CJ2024}
C. Charlier and J. Lenells, The soliton resolution conjecture for the
Boussinesq equation, \emph{J. Math. Pures Appl. (9)} \textbf{191}
(2024), Paper No. 103621.

\bibitem{DKMVZ1999}
P. Deift, T. Kriecherbauer, K. T.-R. McLaughlin, S. Venakides, and X. Zhou,
Uniform asymptotics for polynomials orthogonal with respect to varying
exponential weights and applications to universality questions in random matrix
theory, \emph{Comm. Pure Appl. Math.} \textbf{52} (1999), no. 11,
1335--1425.

\bibitem{Deift1999}
P. Deift, \emph{Orthogonal Polynomials and Random Matrices: A Riemann--Hilbert
Approach}, Courant Lecture Notes in Mathematics, vol. 3, New York University,
Courant Institute of Mathematical Sciences, New York; American Mathematical
Society, Providence, RI, 1999.


\bibitem{Lenells2018}
J. Lenells, Matrix Riemann--Hilbert problems with jumps across Carleson
contours, \emph{Monatsh. Math.} \textbf{186} (2018), no. 1, 111--152.

\bibitem{Lenells2017}
J. Lenells, The nonlinear steepest descent method for Riemann--Hilbert
problems of low regularity, \emph{Indiana Univ. Math. J.} \textbf{66}
(2017), no. 4, 1287--1332.



\bibitem{BD2015-2}
D. Kraus, G. Biondini, and G. Kova\v{c}i\v{c}, The focusing Manakov system
with nonzero boundary conditions, \emph{Nonlinearity} \textbf{28} (2015),
no. 9, 3101--3151.

\bibitem{P2023}
B. Prinari, Inverse scattering transform for nonlinear Schr\"odinger systems
on a nontrivial background: a survey of classical results, new developments
and future directions, \emph{J. Nonlinear Math. Phys.} \textbf{30} (2023),
no. 2, 317--383.

\bibitem{LSG-2023}
H. Liu, J. Shen, and X. G. Geng, Inverse scattering transformation for the
$N$-component focusing nonlinear Schr\"odinger equation with nonzero boundary
conditions, \emph{Lett. Math. Phys.} \textbf{113} (2023), no. 1,
Paper No. 23, 47 pp.

\bibitem{ZGP2026}
X. G. Geng and H. B. Zhang, Long-time asymptotics for the focusing Manakov
system with nonzero boundary conditions, in preparation.


\end{thebibliography}
\end{document}